\def\b{\mathbb}
\def\bb{\boldsymbol}
\def\sinr{\textrm{SINR}}
\def\dint{{\rm \ d}}
\newtheoremstyle{slplain}
  {3pt}
  {3pt}
  {\slshape}
  {}
  {\bfseries}
  {.}%
  { }
  {}
\theoremstyle{slplain}
\newtheorem{cor}{Corollary}
\newtheorem{lem}{Lemma}
\newtheorem{pro}{Proposition}
\newtheorem{assum}{Assumption}
\begin{document}

\title{Modeling, Analysis and Design for Carrier Aggregation in Heterogeneous Cellular Networks}

\author{
\IEEEauthorblockA{Xingqin Lin, Jeffrey G. Andrews and Amitava Ghosh}
\thanks{Xingqin Lin and Jeffrey G. Andrews are with Department of Electrical $\&$ Computer Engineering, The University of Texas at Austin, USA. (E-mail: xlin@utexas.edu, jandrews@ece.utexas.edu).  Amitava Ghosh is with Nokia Siemens Networks. (E-mail: amitava.ghosh@nsn.com).  This research was supported by Nokia Siemens Networks.}
}

\maketitle

\begin{abstract}
Carrier aggregation (CA) and small cells are two distinct features of next-generation cellular networks. Cellular networks with different types of small cells are often referred to as HetNets. In this paper, we introduce a load-aware model for CA-enabled \textit{multi}-band HetNets.  Under this model, the impact of biasing can be more appropriately characterized; for example, it is observed that with large enough biasing, the spectral efficiency of small cells may increase while its counterpart in a fully-loaded model always decreases. Further, our analysis reveals that the peak data rate does not depend on the base station density and transmit powers; this strongly motivates other approaches e.g. CA to increase the peak data rate. Last but not least, different band deployment configurations are studied and compared. We find that with large enough small cell density, spatial reuse with small cells outperforms adding more spectrum for increasing user rate. More generally, universal cochannel deployment typically yields the largest rate; and thus a capacity loss exists in orthogonal deployment. This performance gap can be reduced by appropriately tuning the HetNet coverage distribution (e.g. by optimizing biasing factors). 
\end{abstract}

\IEEEpeerreviewmaketitle

\section{Introduction}

Carrier aggregation (CA) is considered as a key enabler for LTE-Advanced \cite{3gppLTER10}, which can meet or even exceed the IMT-Advanced requirement for large transmission bandwidth (40 MHz-100 MHz) and high peak data rate (500 Mbps in the uplink and 1 Gbps in the downlink). Simply speaking, CA enables the concurrent 
utilization of multiple component carriers on the physical layer to expand the effective bandwidth \cite{Iamura2010CA, Ghosh2010CA, yuan2010carrier}. The aggregated bandwidth can be as large
as 100 MHz, for example, by aggregating 5 component carriers of bandwidth 20 MHz each. The bandwidth of these component carriers can vary widely (ranging from 1.4 MHz to 20 MHz for LTE carriers \cite{3gppPhy, Parkvall2008LTE, lin2013dynamic}). The propagation characteristics
of different component carriers may vary significantly, e.g., a component carrier in the 800 MHz band has very different propagation characteristic from a component carrier in the 2.5 GHz band. Due to the significance and unique features of CA, appropriate CA management is essential for enhancing the performance of CA-enabled cellular networks.

\subsection{Related Work and Motivation}

Cellular networks are undergoing a major evolution as current cellular networks cannot keep pace with user demand through simply deploying more macro base stations (BS) \cite{AndCla12}. As a result, attention is being shifted to deploying small, inexpensive, low-power nodes in the current macro cells; these low power nodes may include  pico \cite{landstrom2011deployment} and femto \cite{AndCla12} BSs,  as well as distributed antennas \cite{clark2001distributed}. Cellular networks with them take on a very heterogeneous characteristic, and are often referred to as HetNets \cite{qual2011LTE, eric2011het,  damnjanovic2011survey, andrews2013seven}. Due to the heterogeneous and ad hoc deployments common in low power nodes, the validity of adoption of the classical models such as Wyner model \cite{wyner1994shannon} or hexagonal grid \cite{3gppPhy} for HetNet study becomes questionable \cite{andrews2010tractable}.

Not surprisingly, random spatial point processes \cite{stoyan1995stochastic}, particularly homogeneous Poisson point process (PPP) for its tractability, have been used to model the locations of the various types of heterogeneous BSs. Such a probabilistic approach for cellular networks can be traced back to late 1990's \cite{baccelli1997stochastic, brown2000cellular}. The PPP does not exactly capture every characteristic of cellular networks; for example, two points in a PPP can infrequently be arbitrarily close to each other, which is not usually true in practice, especially for the deployment of macro BSs. Nevertheless, as a null hypothesis, the PPP model opens up tractable ways of assessing the statistical properties of cellular networks, and it provides reasonably good performance prediction. Indeed, the recent work \cite{andrews2010tractable, lin2012towards} have demonstrated that the PPP model is about as accurate in terms of downlink SINR distribution and handover rate as the hexagonal grid for a representative urban/suburban cellular network; the gap is even smaller for the uplink SINR distribution with channel inversion \cite{lin2013optimal}. Further, the PPP model may become even more accurate in HetNets due to the heterogeneous and ad hoc deployments common in low power nodes\cite{andrews2010tractable}.

Using the PPP model, the locations of different type of BSs in a HetNet are often modeled by an independent PPP \cite{dhillon2011modeling}. Due to the tractability of PPP model, many analytical results such as coverage probability and rate can be obtained \cite{dhillon2011modeling, Sayandev2012dis, jo2011tractable}; more interestingly, these analytical results fairly agree with industry findings obtained by extensive simulations and experiments \cite{nsn2012het}. As a result, similar models have been further used to optimize the HetNet design including spectrum allocation \cite{Cheung2012throughput}, load balancing \cite{singh2012offloading, Sung2013Energy}, spectrum sensing \cite{Hesham2013two}, etc. These encouraging progresses motivate us to adopt the PPP model for CA study in this paper.

One major concern about deploying small cells is that they have limited coverage due to their low transmit powers. As a result, small cells are often lightly loaded and would not accomplish much without load balancing, while macro cells are still heavily loaded. To alleviate this issue, a simple load balancing approach called biasing has been proposed \cite{qual2011LTE}; biasing allows the low power nodes to \textit{artificially} increase their transmit powers.  As a result, it helps expand the coverage areas of small cells and enables more user equipment (UE) to be served by small cells \cite{nsn2012het, qual2011LTE, lopez2012expanded}. It is expected that biasing can help balance network load  and correspondingly leads to higher throughput. However, a theoretical study on the impact of biasing is challenging. While \cite{jo2011tractable} modeled biasing in a \textit{single}-band HetNet, it assumes a fully-loaded HetNet, i.e., all the BSs are simultaneously active all the time. Similar fully-loaded model is also used in \cite{singh2012offloading} for offloading study.

Sum rate is an important metric in wireless networks, particularly CA-enabled HetNets. Unfortunately, from the sum-rate perspective, biasing does not help in a fully-loaded network. Thus, in order to appropriately examine the effect of biasing on the sum rate of CA-enabled HetNets, an appropriate notion of BS load is needed. While \cite{dhillon2012load} proposed a load-aware model in which low power nodes can be less active than macro BSs over the \textit{time} domain, it focuses on \textit{single}-band HetNet and is still not sufficient for CA study which essentially involves multi-band modeling and analysis. Therefore, the main goal of this paper is to propose a multi-band HetNet model with an appropriate notion of load. With the proposed model, we would like to theoretically examine the impact of biasing and study how to deploy the available bands in a HetNet to best exploit CA.

\subsection{Contributions and Main Results}

The main contributions and outcomes of this paper are as follows.

\subsubsection{A new load-aware multi-band HetNet model}

In Section \ref{sec:system}, we introduce a load-aware model for CA-enabled \textit{multi}-band HetNets. Compared to \cite{dhillon2012load} which uses a time-domain notion of load, the new model uses a notion of fractional load in the frequency domain. While the former is suitable for bursty traffic, the latter is more suitable for static traffic like VoIP. Moreover, the latter provides a different view on load modeling in HetNets and thus can be viewed as complementary to the former. The proposed model is flexible enough to capture the main unique features of both multi- and single flow CA (defined in Section \ref{sec:system}) but yet tractable enough for analysis. Under this load-aware model, the impact of biasing is appropriately characterized; for example, it is observed that with large enough biasing, the spectral efficiency of small cells can actually increase while its counterpart in a fully-loaded model always decreases because of the signal-to-interference-plus-noise ratio (SINR) reduction from biasing.

\subsubsection{Rate analysis in multi-band HetNets}

Unlike rate analysis in a single-band HetNet, there exists correlation among the signals and interference across the bands in a multi-band HetNet. This correlation feature is highlighted in Section \ref{sec:multiband}. In this paper, we present a way to deal with this correlation first for single tier networks (c.f. Section \ref{sec:multiband}) and then extend it for general multi-tier HetNets (c.f. Section \ref{sec:single}). This way of dealing with the multi-band correlation contributes to the tractable approach to the performance analysis of cellular networks \cite{andrews2010tractable}.

\subsubsection{Design insights}

From the analytical results, several observations may be informative for system design. In single tier interference-limited networks (where noise is ignored), it is found that the peak data rate does not depend on the BS density and transmit powers; this strongly motivates other approaches e.g. CA to increase the peak data rate. Further, if the aggregated carriers are sorted in ascending order based on their path-loss exponents, the peak data rate scales \textit{super-linearly} with the number of aggregated carriers; this  provides a finer characterization for the common conjecture of \textit{linearly} scaled peak data rate with increasing number of carriers\cite{Ghosh2010CA}.

Different band deployment configurations are studied and compared for CA-enabled HetNets. We derive the rate expressions of $1$-band-$K$-tier deployment and $K$-band-$1$-tier deployment; these two deployments represent two popular approaches for increasing rate in cellular networks: spatial reuse with small cells and adding more bandwidth. It is found that, if the densities of low power nodes are large enough, a $1$-band-$K$-tier deployment can provide larger rate than the $K$-band-$1$-tier deployment. This gives additional theoretical justification for using small cells to solve the current ``spectrum crunch''. More generally, we find that universal cochannel deployment -- all the tiers use all the bands -- typically yields the largest rate. Correspondingly, there is a capacity loss in orthogonal deployment -- different tiers use different bands. This performance gap can be reduced by appropriately tuning the HetNet coverage distribution (e.g. by optimizing biasing factors).

\section{System Model}
\label{sec:system}

We consider a general HetNet with $\mathcal{K} = \{1,...,K\}$ denoting the set of $K$ tiers which may include macro cells, pico cells, femtocells, and possibly other elements. BSs of different types may differ in terms of deployment density, spectrum resource, transmit power and supported modulation and coding scheme. In this paper, we focus on the downlink and assume open access for all the small cells. The other key aspects of the studied model are described below.

\subsection{Distributions of BSs and UEs}

The BS locations are modeled as $K$ independent homogeneous PPPs \cite{stoyan1995stochastic}. Denote by $\Phi_k$ the set of PPP distributed BSs in tier $k$ and $\lambda_k$ its density. The UEs are also assumed to be randomly distributed according to an independent PPP of density $\lambda^{(u)}$. Or equivalently, the number of UEs in a certain region follows Poisson distribution whose mean equals $\lambda^{(u)}$ multiplied by the area of the region, and given the number, UEs are independently and uniformly located in the region. Uniform random UE spatial 
distribution over a certain region is often utilized by industry in system level simulations (see e.g. \cite{3gppOfdm}).

\subsection{Channel Model}

We assume that there are a set of $M$ available \textit{bands} denoted as  $\mathcal{M}= \{1, 2, ..., M\}$. The bandwidth and path loss exponent of each band $i$ are denoted by $B_i$ and $\alpha_i$, respectively. Here we use different path loss exponents for different bands to capture the possibly large differences in propagation characteristics associated with each band's carrier frequency. We further assume that each band $i$ is small enough to have relatively constant path loss exponent $\alpha_i$ across it.

Suppose each tier-$k$ BS transmits at constant power $P_{i,k}$ in the $i$-th band, provided that band $i$ is used by tier $k$. Then the received power $\tilde{P}_{i,k}$ at the typical UE (assumed to be at the origin) from the BS located at $Y \in \b R^2$ is modeled as
\begin{align}
\tilde{P}_{i,k, Y} = P_{i,k} H_{i,Y} C_i \| Y \|^{ - \alpha_i},
\end{align}
where $\| Y \|$ denotes the Euclidean norm of $Y$, $C_i$ is a constant that gives the path loss in band $i$ when the link length is $1$, and $H_{i,Y}$ is a random variable capturing the fading value of the radio link from the BS at $Y$ to the typical UE in band $i$.  Note that $C_i$ strongly  depends on carrier frequency, e.g. $C_i \cong (\mu_i/4 \pi)^2 $ where $\mu_i$ denotes the wavelength. For simplicity, we ignore shadowing and consider Rayleigh fading only, i.e., $H_{i,Y} \sim \textrm{Exp}(1)$. In fact, the randomness of the BS locations actually helps to emulate shadowing: As shadowing variance increases, the resulting propagation losses between the BSs and the typical user in a grid network converge to those in a Poisson distributed network \cite{Blaszczyszyn2012using}.

\subsection{User Association Schemes}
\label{subsec:user}

We introduce two types of CA, as shown in Fig. \ref{fig:1}. The first type is called \textit{multi-flow} CA, where UEs can be potentially associated with all the available tiers simultaneously (but in different bands) and can aggregate data using all the available bands. Then the typical UE is associated with tier $k$ in band $i$  that provides the maximum biased received power, i.e., $k = \arg \max (Z_\ell P_{i,\ell } \| Y_{\ell_0} \|^{ - \alpha_i } : \ell \in \mathcal{K}  ) $,
where $Z_\ell$ denotes the biasing factor of tier $\ell$ and $Y_{\ell_0}$ denotes the location of the nearest BS in tier $\ell$. Biasing factors $\boldsymbol{Z}=\{Z_k:k\in \mathcal{K}\}$ are used in HetNets for the purpose of load balancing: Adopting larger $Z_k$ expands the cell range of BSs in tier $k$ and thus more UEs can connect to tier $k$. 

\begin{figure}
\centering
\includegraphics[width=8cm]{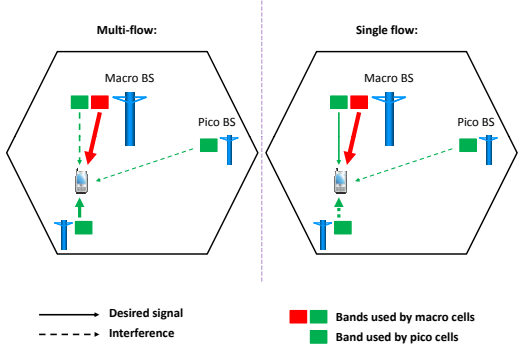}
\caption{System model: multi-flow versus single flow}
\label{fig:1}
\end{figure}

The second type is called \textit{single flow} CA, where UEs can be associated with only one of the available tiers at a time, i.e., one BS at some tier, though they still can aggregate data using all the available bands used by that tier. For this type of CA, the typical UE scans over all the tiers and bands and  connects to the tier that provides the strongest biased received power in some band, say $i^*$. Then the UE performs CA with respect to this tier. Formally, the UE connects to tier $k$ such that $ (k,i^*) = \arg \max (Z_\ell P_{i,\ell} \| Y_{\ell_0} \|^{ - \alpha_i } : (\ell,i) \in \mathcal{K} \times \mathcal{M}  ) $. Single flow may be closer to how CA is normally supported in reality: UE is configured with a primary component carrier that provides the best signal quality and other secondary component carriers are only added when applicable.

In summary, multi-flow allows the UE to perform CA across all the available bands in (possibly) different tiers, while single flow allows the UE to perform CA only across the available bands used by the selected tier. In this paper, we focus on single flow and refer to \cite{Lin2013CAICC} for the multi-flow study.


\subsection{Load Modeling}

We assume that each UE connecting to tier $k$ in band $i$ requires a basic share $b_{i,k}$ of the bandwidth resource $B_i$. For example,  $b_{i,k}$ may represent the basic frequency resource allocation unit in LTE networks. This requirement can be satisfied if tier $k$ in band $i$ is under-loaded, e.g., $L_{i,k} \leq B_i/b_{i,k}$ where $L_{i,k}$ denotes the mean number of UEs attempting to connect to a tier-$k$ BS in band $i$. If $L_{i,k} > B_i/b_{i,k}$, this implies that too many UEs attempt to connect to tier $k$ and correspondingly tier $k$ becomes fully-loaded. In this case, some UEs within the coverage of tier $k$ have to be blocked. Or equivalently, assuming all the UEs are admitted by tier $k$, they can only obtain a fraction of the time domain resource.\footnote{This is essentially a round-robin scheduling mechanism. Other more sophisticated scheduling mechanisms (e.g. proportional fair scheduling) may be considered. However, such extensions may significantly complicate the analysis and we treat them as future work. Nevertheless, our current analysis can provide a lower bound on the performance of the more sophisticated scheduling mechanisms.}

In the following, we stick to the former interpretation and will derive the admission probability while keeping in mind that the admission probability can also be interpreted as the fraction of the time domain resource shared by the UEs with no blocking. For each BS in the under-loaded tier $k$, we assume that  the location of the used bandwidth $b_{i,k}L_{i,k}$ in band $i$ is uniformly and independently selected from band $i$. Equivalently, frequency hopping can be assumed. We further define 
\begin{align}
\theta_{i,k} = \min \left(\frac{b_{i,k}L_{i,k}}{B_i},1 \right), \forall k \in \mathcal{K}, \forall i \in \mathcal{M},
\end{align}
and refer to $\theta_{i,k}$ as the \textit{load} of tier $k$ in band $i$. Accordingly, the spectrum resources of band $i$ currently allocated by tier $k$ equals $\theta_{i,k} B_i$. 

This bandwidth sharing model helps capture the fact that the load in HetNets is often unbalanced. In particular, small BSs have limited coverage and thus the mean number $L_{i,k}$ of UEs served by per small BS  is small, yielding light load $\theta_{i,k} \ll 1$. In contrast, macro BSs have much larger coverage and thus the mean number $L_{i,k}$ of UEs served by per macro BS is large, yielding heavy load $\theta_{i,k}\approx 1$. Further, the bandwidth sharing model is flexible enough to model different network scenarios: the larger $b_{i,k}$ is, the more data-hungry UEs exist in the HetNet; in an extreme case, we can set $b_{i,k}=B_i$, which makes all the available spectrum resources at a BS be consumed as long as the BS serves at least one UE.

\subsection{Performance Metric}

The performance of the HetNet depends on how the available bands are deployed. To model band deployment, we introduce binary decision variables defined as
\[ x_{i,k} = \left\{ \begin{array}{ll}
         1 & \mbox{if band $i$ is used by tier $k$};\\
        0 & \mbox{otherwise}, \end{array} \right. \] 
for all $i \in \mathcal{M}, k \in \mathcal{K}$. We group $x_{i,k}$'s into a vector $\bb x$ denoting the band deployment configuration and make the following assumption on band deployment.
\begin{assum}
Each band is used by at least one tier and each tier uses at least one band, i.e.,
$\sum_{ k \in \mathcal{K} } x_{i,k} \geq 1, \forall i$, and $\sum_{ i \in \mathcal{M} } x_{i,k} \geq 1, \forall k$.
\label{asum:1}
\end{assum}
There is no loss of generality in the above assumption since one can simply exclude the unused band and/or the useless tier from consideration. Given a band deployment configuration, the signal-to-interference-plus-noise ratio (SINR) in band $i$ provided by tier $k$ can be computed as
\begin{align}
\textrm{SINR}_{i,k} = \frac{ x_{i,k} P_{i,k} H_{i,k_0} \parallel Y_{k_0} \parallel^{-\alpha_i}   }{ \sum_{\ell \in \mathcal{K}}  \sum_{Y \in \tilde{\Phi}_{i,\ell} \backslash Y_{k_0} } P_{i,l} H_{i,Y}  \parallel Y \parallel^{-\alpha_i} + w_{i,k}  }, \notag
\end{align}
where $w_{i,k} = b_{i,k} N_0 / C_i$ with $N_0$ being the power spectral density of the background noise, and $\tilde{\Phi}_{i,k}$ is the PPP of density $\theta_{i,k} \lambda_k$ thinned from $\Phi_k$.

Rate, a function of the received $\sinr$, is the paramount metric in CA which aims to provide UEs with very high data rate. This motivates us to adopt the \textit{UE ergodic rate} as the metric for CA study in this paper. UE ergodic rate measures the long term data rate attained by a typical UE and will be derived in the following sections.

Before ending this section, we would like to stress that though not modeled in this paper, interference management is also a critical part of HetNets, especially for cochannel deployment with biasing; otherwise, aggressive biasing may result in unacceptably poor SINR performance of cell-edge users. We treat the incorporation of interference management as future work.

\section{Multi-Band Analysis: Single Tier Case}
\label{sec:multiband}

CA study essentially involves multi-band analysis, which is involved due to the spatial correlations in the multi-band signals and interference. To make this point explicit, let us consider a single tier cellular network that consists of macro BSs only and uses two bands, $1$ and $2$. In this case multi-flow and single flow CA coincide. Further, the general user association policies (c.f. Section \ref{subsec:user}) reduce to the simple nearest BS association: the typical UE connects to its nearest BS. Clearly,  the received signals in band $1$ and $2$ are both emitted from the nearest BS and thus are strongly correlated. By similar reasoning, the interference powers in band $1$ and $2$ are also correlated due to the presence of common randomness in the locations of the interferers.

In this section we analyze single tier cellular network to demonstrate how to cope with the spatial correlations in the multi-band analysis; we will extend it to general $K$ tier network later. As $K=1$, we shall drop the subscript $k$ in this section for ease of notation. 


To begin with, we know that there are $\lambda^{(u)}/\lambda$ UEs\footnote{This can be shown rigorously by using Neveu's exchange formula (see e.g. \cite{baccelli2009stochastic}).} on average associated with a typical BS and correspondingly $L_i = \lambda^{(u)}/\lambda, \forall i$. Then by definition the load of each band $i$ is given by
\begin{align}
\theta_i = \min \left( \frac{b_{i} L_i }{B_i} , 1 \right) = \min \left( \frac{\lambda^{(u)} b_{i}}{\lambda B_i} , 1 \right).
\end{align} 
It follows that the effective transmitter process $\tilde{\Phi}_i$ in band $i$ is a PPP with density $\theta_i \lambda$, thinned from the common ground transmitter process $\Phi$ of density $\lambda$. Clearly, $\{\tilde{\Phi}_i\}$ are not independent and thus spatial correlations are induced.

Now let us condition on the event that the typical UE connects to the BS $Y_{0}$ located at a distance $r$ from the UE. Here comes a tricky thing: conditioned on connecting to the BS $Y_{0}$, this tagged BS has a coverage area containing the typical UE and thus statistically it has a larger coverage cell than that of a typical BS. This fact is known as Feller's paradox (see e.g. \cite{baccelli2000superposition}). For $x\in \Phi$, denote by $C (x, \Phi)$ the Voronoi cell (induced by $\Phi$) centered at $x$. From \cite{ferenc2007size}, we know the pdf of  the size $\| C (0, \Phi) \|$ of the coverage area of the tagged BS is given by
$$
f_{\| C (0, \Phi) \|} (x) = \frac{3.5^{4.5}}{\Gamma(4.5)} \lambda (\lambda x)^{3.5} e^{-3.5 \lambda x}, \quad x \geq 0.
$$ 
Conditioning on $\| C (0, \Phi) \|=x$, the number of UEs located in $C(0, \Phi)$ is Poisson with mean $\lambda^{(u)} x$. So the mean number of UEs located in $C(0, \Phi)$ is given by
$$
\bar{N} = \int_0^{\infty} \lambda^{(u)} x f_{\| C (0, \Phi) \|} (x) \dint x = \frac{9\lambda^{(u)}}{7\lambda}.
$$
As expected, the above term is greater than $\lambda^{(u)}/\lambda$. Then the admission probability of the typical UE in each band $i$ is given by
\begin{align}
p_i = \max \left( \frac{7\lambda B_i}{9\lambda^{(u)} b_{i}} , 1 \right).
\end{align} 
Now conditioning on the  length $\|Y_{0} \| = r$ of the typical link and the point process $\{\tilde{\Phi}_i\}$, the rate that the typical UE experiences in band $i$ equals  $p_i$ multiplied by spectral efficiency $\log (1 + \textrm{SINR}_{i}  )$. Further, conditioning on $\|Y_{0} \| = r$ and $\{\tilde{\Phi}_i\}$, these rates are independent over the $M$ bands since rate in each band $i$ only depends on the fading field in band $i$ and the fading fields are assumed to be independent over the $M$ bands. 

Summing the rates over the bands, we obtain the ergodic rate of the CA-enabled UE as 
\begin{align}
&\b E_{H}[R | \{\tilde{\Phi}_i\},   \|Y_{0} \| = r  ] \notag \\
&= \b E_{H}  \left[  \sum_{i=1}^M p_i b_i \log (1 + \textrm{SINR}_{i}  )  \Big |  \{\tilde{\Phi}_i\},  \|Y_{0} \| = r  \right] ,
\label{eq:r1:1}
\end{align}
where the expectation is over the fading fields. Denote by $I_{\tilde{\Phi}_i} = \sum_{Y \in \tilde{\Phi}_i \backslash Y_{0} } P_{i} H_{i,Y}  \parallel Y \parallel^{-\alpha_i} $, the right hand side of (\ref{eq:r1:1}) equals
\begin{align}
&\b E_{H}  \left[ \sum_{i=1}^M p_i b_i  \log \left( 1 + \frac{ P_{i} H_{i,0} r^{-\alpha_i}   }{I_{\tilde{\Phi}_i}  + w_{i}  }  \right) \right]  \notag \\
&= \sum_{i=1}^M p_i b_i \b E  \left[  \log \left( 1 + \frac{ P_{i} H_{i,0} r^{-\alpha_i}   }{I_{\tilde{\Phi}_i}  + w_{i}  }  \right) \right]  \notag \\
&= \sum_{i=1}^M  p_i b_i \int_0^{\infty}   \frac{1}{1+t} \b P \left(\frac{ P_{i} H_{i,0} r^{-\alpha_i}   }{I_{\tilde{\Phi}_i}  + w_{i}  }  \geq t \right)  \dint t, 
\label{eq:r1:2}
\end{align}
where in the last equality we use the fact: $\b E [ f(X) ] = \int_0^{\infty} f'(x) \b P( X \geq x ) \dint x $, where $f(\cdot)$ is non-negative and monotonically increasing function. Further,
\begin{align}
&\b P \left(\frac{ P_{i} H_{i,0} r^{-\alpha_i}   }{I_{\tilde{\Phi}_i}  + w_{i}  }  \geq t \right)  \notag \\
&=  \b P \left(  H_{i,0}  \geq t r^{\alpha_i} P_{i} ^{-1} ( I_{\tilde{\Phi}_i}  + w_{i}   ) \right)  
\notag \\
&=  e^{ -  t w_i r^{\alpha_i} P_{i} ^{-1}}  \b E_H \left[  e^{ - t  r^{\alpha_i} P_{i} ^{-1} I_{\tilde{\Phi}_i} } \right]   \notag \\
&=  e^{ -  t w_i r^{\alpha_i} P_{i} ^{-1}}  \b E_H \left[  e^{ - t  r^{\alpha_i}  \sum_{Y \in \Phi_{i} \backslash Y_{0} }  H_{i,Y}  \parallel Y \parallel^{-\alpha_i}   } \right]   \notag \\
&=   e^{ -  t w_i r^{\alpha_i} P_{i} ^{-1}}  \b E_H \left[ \prod_{Y \in \tilde{\Phi}_{i} \backslash Y_{0} }   e^{ - t  r^{\alpha_i}   H_{i,Y}  \parallel Y \parallel^{-\alpha_i}   } \right]   \notag \\
&=   e^{ -  t w_i r^{\alpha_i} P_{i} ^{-1}}   \prod_{Y \in \tilde{\Phi}_{i} \backslash Y_{0} } \b E_H \left[   e^{ - t  r^{\alpha_i}   H_{i,Y}  \parallel Y \parallel^{-\alpha_i}   } \right]   \notag \\
&=  e^{ -  t w_i r^{\alpha_i} P_{i} ^{-1}}  \prod_{Y \in \tilde{\Phi}_{i} \backslash Y_{0} }  \frac{1}{1 +  t r^{\alpha_i} \parallel Y \parallel^{-\alpha_i}  },
\label{eq:r1:3}
\end{align}
where in the second equality we use $H_{i,0} \sim \textrm{Exp} (1)$ and thus $\b P(H_{i,0} \geq x ) = e^{-x}$; the penultimate equality follows as the fading fields are independent; and in the last equality we use $H_{i,Y} \sim \textrm{Exp} (1)$ and thus $\b E[ e^{- s H_{i,Y} }  ] = \frac{1}{1+s}$.
Plugging (\ref{eq:r1:3}) into (\ref{eq:r1:2}) yields
\begin{align}
&\b E[R | \{\tilde{\Phi}_i\},  \{ \|Y_{0} \| = r \} ] =   \sum_{i=1}^M p_i b_i \cdot  \notag \\
&   \int_0^{\infty}  \frac{1}{1+t}  e^{ -  t w_i r^{\alpha_i} P_{i} ^{-1}}  \prod_{Y \in \tilde{\Phi}_i \backslash Y_{0} }  \frac{1}{1 +  t r^{\alpha_i} \parallel Y \parallel^{-\alpha_i}  }     \dint t. \notag 
\end{align}
Now de-conditioning with respect to $\{\tilde{\Phi}_i\}$ yields
\begin{align}
&\b E[R | \|Y_{0} \| = r ] = \b E_{\{\tilde{\Phi}_i\}} \left[\b E[R | \{\tilde{\Phi}_i\},   \|Y_{0} \| = r  ] \right] \notag \\
&=   \sum_{i=1}^M p_i b_i \int_0^{\infty}   \frac{1}{1+t}  e^{ -  t w_i r^{\alpha_i} P_{i} ^{-1}} \notag \\  & \quad \quad \quad \quad \cdot \b E_{\{\tilde{\Phi}_i\}} \left[ \prod_{Y \in \tilde{\Phi}_i \backslash Y_{0} }  \frac{1}{1 +  t r^{\alpha_i} \parallel Y \parallel^{-\alpha_i}  } \right]     \dint t \notag \\
&=   \sum_{i=1}^M p_i b_i \int_0^{\infty}   \frac{e^{ -  t w_i r^{\alpha_i} P_{i} ^{-1}}}{1+t}   e^{ - \theta_i \lambda \int_{B(0,r)} 1 - \frac{1}{1 +  t r^{\alpha_i} y^{-\alpha_i}} \dint y }     \dint t \notag \\
&=   \sum_{i=1}^M p_i b_i \int_0^{\infty}   \frac{1}{1+t}  e^{ -   w_i P_{i} ^{-1} t  r^{\alpha_i}} e^{ - \pi \theta_i \lambda \rho (t, \alpha_i, 1) r^2 }     \dint t ,
\label{eq:r1:4}
\end{align}
where in the penultimate equality the domain of integration is from $r$ to $\infty$ (because conditioning on the association with the BS at $Y_{0}$, the closest interferer has at least a distance $\| Y_{0} \|=r$ away from the typical UE), and
\begin{align}
\rho (t, \alpha, \beta)  =    \int_{1 }^\infty   \frac{t}{t + \beta x^{\frac{\alpha}{2}} }  \dint x.
\label{eq:r1:5}
\end{align}
The final step is to de-condition with respect to $\|Y_{0} \| = r$. To this end, we need the distribution function of the length $\|Y_{0} \|$  of the typical link; this is easy in single tier network \cite{baccelli2009stochastic}:
\begin{align}
\b P (\|Y_{0} \| \geq r  ) = \b P( \Phi ( B(0,r) ) = 0 ) = e^{-\lambda \pi r^2 }, r \geq 0,
\label{eq:r1:8}
\end{align} 
from which we have $f_{\|Y_{0} \| } ( r ) = 2 \pi \lambda r e^{ - \pi \lambda r^2 }, r \geq 0$. Using  $f_{\|Y_{0} \| } ( r )$ and de-conditioning with respect to $\|Y_{0} \| = r$ in (\ref{eq:r1:4}) yields
\begin{align}
\b E[R  ] &= \b E_{ \|Y_{0} \| } \left[\b E[R |  \|Y_{0} \|   ] \right] 
= \int_0^{\infty} \sum_{i=1}^M p_i b_i \int_0^{\infty}   \notag \\ 
& \cdot \frac{1}{1+t}  e^{ -   w_i P_{i} ^{-1} t  r^{\alpha_i}} e^{ - \pi \theta_i \lambda \rho (t, \alpha_i) r^2 }     \dint t \cdot 2 \pi \lambda r e^{ - \pi \lambda r^2 }  \dint r .\notag 
\end{align}
Applying Fubini's theorem, we finally obtain UE ergodic rate in Prop. \ref{pro:r3}.
\begin{pro}[\textbf{Single Tier UE Ergodic Rate}]
The single tier UE ergodic rate $\bar{R} = \b E[R  ] $ is given by 
\begin{align}
\bar{R}
= \sum_{i=1}^M 2 \pi \lambda p_i b_i  & \int_0^{\infty}   \int_0^{\infty}  \frac{1}{1+t}  e^{ -   w_i P_{i} ^{-1} t  r^{\alpha_i} } \notag \\
& \cdot e^{ - \pi \theta_i \lambda \rho (t, \alpha_i, 1) r^2 }        e^{ - \pi \lambda r^2 }  r \dint r \dint t .
\label{eq:r1:6}
\end{align}
\label{pro:r3}
\end{pro}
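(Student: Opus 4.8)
The plan is to obtain $\bar R$ by peeling off conditional expectations one layer at a time along the usual route for Poisson cellular models, while taking care of the extra multi-band structure. First I would condition jointly on the serving distance $\|Y_0\| = r$ and on the entire family of thinned interferer processes $\{\tilde\Phi_i\}_{i=1}^M$. Since the fading fields are independent across bands, the per-band contributions $p_i b_i\log(1+\sinr_i)$ are conditionally independent given $(\{\tilde\Phi_i\},r)$, so $\mathbb{E}[R\mid\{\tilde\Phi_i\},r]$ splits into the sum over $i$ already displayed in (\ref{eq:r1:1}). This is also the step that quietly disposes of the spatial correlation among the $\tilde\Phi_i$: because the ergodic rate is \emph{additive} over bands, linearity of expectation means I only ever need the \emph{marginal} law of each $\tilde\Phi_i$ --- a PPP of density $\theta_i\lambda$ --- and the joint dependence never surfaces. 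Before that I would record the ingredients derived above: $L_i = \lambda^{(u)}/\lambda$, $\theta_i = \min(\lambda^{(u)}b_i/(\lambda B_i),1)$, and the admission probability $p_i$ coming from the size-biased (Feller) Voronoi-cell area $\bar N = 9\lambda^{(u)}/(7\lambda)$.

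Next, for a fixed band $i$ I would evaluate $\mathbb{E}_H[\log(1+\sinr_i)]$ through the identity $\mathbb{E}[f(X)] = \int_0^\infty f'(x)\,\mathbb{P}(X\ge x)\,dx$ with $f = \log(1+\cdot)$, reducing it to $\int_0^\infty \frac{1}{1+t}\,\mathbb{P}(\sinr_i\ge t)\,dt$. Using $H_{i,0}\sim\mathrm{Exp}(1)$, the tail $\mathbb{P}(\sinr_i\ge t)$ equals $e^{-t w_i r^{\alpha_i}P_i^{-1}}$ times the Laplace transform of the interference $I_{\tilde\Phi_i}$ evaluated at $s = t r^{\alpha_i}P_i^{-1}$; since $I_{\tilde\Phi_i}$ is a shot noise over $\tilde\Phi_i\setminus Y_0$ with i.i.d.\ $\mathrm{Exp}(1)$ fading marks, that transform factorizes, conditionally on $\tilde\Phi_i$, into $\prod_{Y\in\tilde\Phi_i\setminus Y_0}\bigl(1+t r^{\alpha_i}\|Y\|^{-\alpha_i}\bigr)^{-1}$, which is exactly (\ref{eq:r1:3}).

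Then I would de-condition on $\tilde\Phi_i$. Conditioned on $Y_0$ being the nearest ground-process BS at distance $r$, Slivnyak's theorem together with the void probability tells us the remaining ground process is a PPP on $B(0,r)^c$; the independent sub-band thinning keeps it Poisson of density $\theta_i\lambda$ on $B(0,r)^c$, while the tagged BS $Y_0$ is always active on the UE's sub-band and is removed from the interference. The probability generating functional of this PPP turns the conditional product into $\exp\{-\theta_i\lambda\int_{B(0,r)^c}\!\bigl(1-(1+t r^{\alpha_i}\|y\|^{-\alpha_i})^{-1}\bigr)\,dy\}$, and passing to polar coordinates with the substitution $\|y\| = r\sqrt{x}$ collapses the area integral to $\pi\theta_i\lambda\,\rho(t,\alpha_i,1)\,r^2$ with $\rho$ as in (\ref{eq:r1:5}). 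Finally I would de-condition on $r$ using the contact-distance density $f_{\|Y_0\|}(r) = 2\pi\lambda r e^{-\pi\lambda r^2}$, which follows from the void probability $\mathbb{P}(\|Y_0\|\ge r) = e^{-\pi\lambda r^2}$, and then invoke Fubini--Tonelli (all integrands nonnegative) to reorder the $r$- and $t$-integrals, arriving at (\ref{eq:r1:6}).

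I do not anticipate a deep obstacle here: the threat that looks real --- the cross-band correlation of $\{\tilde\Phi_i\}$ and of the common serving BS --- is defused by additivity of rate, so only marginals are needed. The genuinely delicate bookkeeping is twofold: getting the admission probability right, which hinges on the Feller-paradox fact that the tagged cell is stochastically larger than a typical cell (hence the $7/9$ factor and $\bar N = 9\lambda^{(u)}/(7\lambda)$ rather than $\lambda^{(u)}/\lambda$); and correctly restricting the PGFL integral to $B(0,r)^c$ rather than all of $\mathbb{R}^2$, since conditioning on the serving distance excludes interferers nearer than $r$. Everything else --- the expectation-to-integral identity, the Laplace-transform factorization, and the interchange of integration order --- is routine.
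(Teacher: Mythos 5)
Your proposal is correct and follows essentially the same route as the paper: the same conditioning on $(\{\tilde\Phi_i\},\|Y_0\|=r)$, the same Feller-paradox admission probability via $\bar N = 9\lambda^{(u)}/(7\lambda)$, the same tail-integral identity and Laplace-transform factorization, the same PGFL evaluation restricted to $B(0,r)^c$ yielding $\pi\theta_i\lambda\rho(t,\alpha_i,1)r^2$, and the same final de-conditioning with $f_{\|Y_0\|}(r)=2\pi\lambda r e^{-\pi\lambda r^2}$ followed by Fubini. Your explicit remark that additivity of rate over bands reduces the cross-band correlation issue to needing only the marginal law of each $\tilde\Phi_i$ is exactly the mechanism the paper exploits, just stated more directly.
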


Further simplification is possible when the noise is ignored, i.e., $w_i \to 0$; we term this case \textit{interference-limited} networks. This special case is of interest because, compared to interference, thermal noise is often not an important issue in modern cellular networks. With $w_i \to 0$, (\ref{eq:r1:6}) reads as follows.
\begin{align}
\b E[R  ] 
&=  \sum_{i=1}^M  p_i b_i   \int_0^{\infty}      \frac{1}{(1+t) (1 + \theta_i  \rho (t, \alpha_i, 1) )}   \dint t. \notag
\end{align}
If $\lambda^{(u)} \geq \lambda$, $b_i = B_i$ and the tagged BS only serves the typical UE, the UE will have a peak data rate:
\begin{align}
\b E[R  ] 
&= \sum_{i=1}^M  q(\alpha_i)  B_i ,
\label{eq:r1:7}
\end{align}
where $q ( \alpha   ) = \int_0^{\infty}     \frac{1}{(1+t) (1 +  \rho (t, \alpha, 1) )}   \dint t$. Some remarks are in order.
\begin{itemize}
\item $q ( \alpha   ) $ is increasing with $\alpha$ as $\rho (t, \alpha, \beta) $ is decreasing with $\alpha$. This implies that  bands of larger path-loss exponents can offer higher rate per Hz, which seems a bit counter-intuitive. However, a careful thought reveals that  bands of larger path-loss exponents provide better spatial separation for the wireless links, which is particularly important in interference-limited networks.
\item Interestingly, Eq. (\ref{eq:r1:7}) does not depend on the BS density $\lambda$ and transmit powers $\{P_i\}$. This is intuitive because,  increasing  BS density $\lambda$ leads to increased signal power but also increased interference power; these two effects exactly counter-balance each other when noise is ignored. Similar reasoning holds if one increases the transmit powers. 
\item Following the previous remark, we see that increasing BS density in interference-limited networks does not lead to increased peak data rate. (But deploying more BSs does allow the network to serve more UEs at the same time.) This strongly motivates other approaches e.g. CA for increasing UE peak data rates.
\item Intuitively, it is believed that peak data rate will scale \textit{linearly} with the number of aggregated carriers \cite{Ghosh2010CA}. Eq. (\ref{eq:r1:7}) provides a finer characterization of this statement: If the aggregated carriers are sorted in ascending order based on their path-loss exponents, the peak data rate in interference-limited networks will scale \textit{super-linearly} with the number of aggregated carriers.
\end{itemize}

\section{Single Flow Carrier Aggregation}
\label{sec:single}

In this section we extend the results for single tier networks to the general $K$ tier HetNets with single flow CA. We make the following additional assumption for tractability.
\begin{assum}
The transmit power of each tier $k$ is not band-dependent, i.e., $P_{i,k} = P_k, \forall i \in \mathcal{M}$.
\label{asum:2}
\end{assum}

\subsection{Coverage, Admission Probability and Load}
\label{subsec:CovAd}

The key difference between HetNets and single tier cellular networks is that a typical UE in a HetNet can connect to any one of the $K$ tiers. 
In other words, UEs in different areas are (possibly) served by different kinds of BSs. For example, some UEs may connect to macro BSs while other UEs are served by newly deployed low power nodes including pico and femto BSs. As a first step, we characterize in Lemma \ref{lem:3} the coverage of each tier $k$, defined as the fraction of UEs served by BSs in tier $k$.
\begin{lem}[\textbf{Single Flow Coverage}]
Let $\pi = ( \pi_k  )_{k \in \mathcal{K}}$,\footnote{We use $\pi$ to denote both the single flow coverage and the constant Pi. The meaning should be clear from the context.} where $\pi_k $ is the fraction of single flow UEs served by BSs in tier $k$. Then
\begin{align}
\pi_k =2 \pi \lambda_k \int_0^\infty r h_k(r) dr,
\label{eq:902}
\end{align}
where $h_k (r) = \exp( -\pi \sum_{\ell \in \mathcal{K}} \lambda_\ell (\frac{ Z_\ell P_\ell }{Z_k P_k})^{ \frac{2}{\alpha_{\ell^\star }} } r^{\frac{ 2 \alpha_{k^\star }}{ \alpha_{\ell^\star } }}  )$ and $ k^\star = \arg \min_{i \in \mathcal{M}: x_{i,k} \neq 0} \alpha_i $ .
\label{lem:3}
\end{lem}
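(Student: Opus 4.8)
The plan is to reduce the single-flow attachment rule to a comparison of the strongest biased signal across the $K$ tiers, each tier being represented by its nearest BS in its best band, and then to run the standard max-biased-power computation over the (independent) nearest-neighbour distances. \textbf{Step 1 (reduce each tier to one band).} By Assumption \ref{asum:2} the biased power the typical UE would receive from the nearest tier-$\ell$ BS in a deployed band $i$ is $Z_\ell P_\ell \|Y_{\ell_0}\|^{-\alpha_i}$, which depends on $i$ only through the exponent; since $\alpha \mapsto r^{-\alpha}$ is decreasing when the link length $r$ exceeds the unit reference distance (the regime of interest), the maximizing band on tier $\ell$ is $\ell^\star = \arg\min_{i: x_{i,\ell}\neq 0}\alpha_i$, which exists for every $\ell$ by Assumption \ref{asum:1}. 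Hence the UE attaches to tier $k$ exactly when $Z_k P_k \|Y_{k_0}\|^{-\alpha_{k^\star}} > Z_\ell P_\ell \|Y_{\ell_0}\|^{-\alpha_{\ell^\star}}$ for all $\ell \in \mathcal{K}\setminus\{k\}$, a tie having probability zero. This association is purely geometric -- no fading or SINR enters -- so $\pi_k$ is simply the probability of that event, which by stationarity of the PPP model (Slivnyak's theorem applied to the independent UE process) coincides with the claimed spatial fraction of UEs served by tier $k$.

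\textbf{Step 2 (condition and integrate).} Write $R_\ell = \|Y_{\ell_0}\|$; the void probability of $\Phi_\ell$ gives $\b P(R_\ell > r) = e^{-\pi\lambda_\ell r^2}$, so $f_{R_\ell}(r) = 2\pi\lambda_\ell r e^{-\pi\lambda_\ell r^2}$, and $\{R_\ell\}_{\ell\in\mathcal{K}}$ is independent because the $\Phi_\ell$ are. Conditioning on $R_k = r$ and rearranging each inequality of Step 1 into $R_\ell > (Z_\ell P_\ell/Z_k P_k)^{1/\alpha_{\ell^\star}} r^{\alpha_{k^\star}/\alpha_{\ell^\star}}$, independence yields
\[ \b P(\text{served by }k \mid R_k = r) = \prod_{\ell\neq k} \exp\!\Big( -\pi\lambda_\ell (Z_\ell P_\ell/Z_k P_k)^{2/\alpha_{\ell^\star}} r^{2\alpha_{k^\star}/\alpha_{\ell^\star}} \Big). \]
De-conditioning with $f_{R_k}$, the factor $e^{-\pi\lambda_k r^2}$ appearing in $f_{R_k}$ is exactly the $\ell = k$ term of this product ($Z_k P_k/Z_k P_k = 1$ and $\alpha_{k^\star}/\alpha_{k^\star} = 1$), so folding it in collapses the product over $\ell\neq k$ into a single exponential of a sum over all $\ell\in\mathcal{K}$; this gives $\pi_k = 2\pi\lambda_k\int_0^\infty r\, h_k(r)\,dr$ with $h_k$ as in the statement.

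\textbf{Main obstacle.} The only step that is not bookkeeping is Step 1 -- identifying $\ell^\star$ and thereby collapsing the per-tier, per-band maximum to a single nearest-BS comparison (and keeping clear the regime in which $\arg\min_i\alpha_i$ is indeed the best band). Everything afterwards is the classical strongest-biased-power association calculation (cf. \cite{jo2011tractable}): independence of the $R_\ell$, the null set of ties, and the algebra absorbing the $\ell=k$ term are all routine.
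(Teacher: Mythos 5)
Your proof is correct and follows essentially the same route as the paper's: reduce the per-tier, per-band maximum to a single comparison at the band $\ell^\star$ with the smallest path-loss exponent, then apply the standard strongest-biased-power association calculation with independent Rayleigh-distributed nearest-BS distances, conditioning on $\|Y_{k_0}\|=r$ and absorbing the $\ell=k$ term into the product. You even make explicit the $r>1$ caveat behind the band-reduction step, which the paper leaves implicit.
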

\begin{proof}
See Appendix \ref{proof:lem:3}.
\end{proof}
The coverage $\{\pi_k\}$ can be equivalently understood as the tier connection probabilities of the typical UE. That is, $\pi_k$ denotes the probability that the typical UE is ``covered'' by tier $k$. To gain some intuition about single flow coverage, let us sort the path-loss exponents in ascending order: $\alpha_1 \leq ... \leq \alpha_M$. Suppose all the tiers use band $1$. Then $k^\star = 1, \forall k\in \mathcal{M}$, and (\ref{eq:902}) reduces to the following:
\begin{align}
\pi_k = 1 - \frac{ \sum_{\ell \neq k} \lambda_\ell ( Z_\ell P_\ell )^{ \frac{2}{\alpha_{1 }} } }{ \lambda_k ( Z_k P_k )^{ \frac{2}{\alpha_{1 }} } + \sum_{\ell \neq k} \lambda_\ell ( Z_\ell P_\ell )^{ \frac{2}{\alpha_{1 }} }}.
\notag 
\end{align}
The above equality clearly shows that increasing biasing $Z_k$ (resp. BS density $\lambda_k$) in tier $k$ increases its coverage $\pi_k$, and $\pi_k \to 1$ as $Z_k \to \infty$ (resp. $\lambda_k \to \infty$). Further, since $\alpha_1 >2$, the coverage $\pi_k$ is more sensitive to the variation in the BS density $\lambda_k$ than to the variation in the biased transmit power $Z_k P_k$, which is also true for the throughput in wireless packet networks \cite{win2009mathematical}.

With Lemma \ref{lem:3}, we next derive the load of each tier, which uniformly and independently blocks UEs in its coverage if it is fully-loaded. 
\begin{lem}[\textbf{Single Flow Load}]
With single flow CA, the load $\theta_{i,k}$ of tier $k$ in band $i$ is given by
\[ \theta_{i,k} = \left\{ \begin{array}{ll}
         \frac{   2 \pi b_{i,k} \lambda^{(u)} G_k  }{B_i }  & \mbox{if $x_{i,k}\neq 0 \textrm{ and } 0<    2 \pi \lambda^{(u)} G_k  \leq \frac{B_i}{b_{i,k}}$};\\
         1 & \mbox{if $x_{i,k}\neq 0 \textrm{ and } 2 \pi \lambda^{(u)} G_k  > \frac{B_i}{b_{i,k}}$};\\
         0 & \mbox{if $x_{i,k}=0$}.\end{array} \right. \]
where $G_k = \int_0^\infty r h_k(r) dr$.
\label{lem:4}
\end{lem}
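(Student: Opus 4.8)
The plan is to express the load $\theta_{i,k}$ through the mean number $L_{i,k}$ of UEs requesting band-$i$ resources at a typical tier-$k$ BS, and then invoke Lemma~\ref{lem:3}. The trivial case comes first: if $x_{i,k}=0$, band $i$ is not used by tier $k$, so no UE ever occupies band-$i$ spectrum at a tier-$k$ BS; hence $L_{i,k}=0$ and $\theta_{i,k}=\min(b_{i,k}L_{i,k}/B_i,1)=0$, which is the third case of the claimed expression.

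Next, suppose $x_{i,k}\neq0$. Under single flow CA a UE associated with tier $k$ consumes a share of the spectrum in \emph{every} band that tier $k$ uses, so the UEs requesting band-$i$ resources at a tier-$k$ BS are exactly the UEs served by that BS; in particular $L_{i,k}$ does not depend on $i$ as long as $x_{i,k}\neq0$. To evaluate this mean I would use Neveu's exchange formula between the Palm distributions of the UE PPP and of the tier-$k$ BS PPP $\Phi_k$ — the same device invoked in Section~\ref{sec:multiband} to argue that a typical single-tier BS serves $\lambda^{(u)}/\lambda$ UEs on average, see \cite{baccelli2009stochastic}. Marking every UE by the location of its serving BS, the exchange formula gives $\lambda_k L_{i,k}=\lambda^{(u)}\,\b P(\text{typical UE served by tier }k)=\lambda^{(u)}\pi_k$, i.e. $L_{i,k}=\lambda^{(u)}\pi_k/\lambda_k$, where $\pi_k$ is the single flow coverage of Lemma~\ref{lem:3}; equivalently this is mass conservation, the intensity $\lambda^{(u)}\pi_k$ of UEs attached to tier $k$ being shared among the tier-$k$ BSs of intensity $\lambda_k$.

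It then remains to substitute and unwind the minimum. Using $\pi_k=2\pi\lambda_k G_k$ from Lemma~\ref{lem:3}, with $G_k=\int_0^\infty r h_k(r)\dint r$, I obtain $L_{i,k}=2\pi\lambda^{(u)}G_k$; here $0<G_k\le\frac{1}{2\pi\lambda_k}<\infty$, since keeping only the $\ell=k$ summand in the exponent of $h_k$ yields $h_k(r)\le e^{-\pi\lambda_k r^2}$. Plugging into $\theta_{i,k}=\min(b_{i,k}L_{i,k}/B_i,1)$ gives $\theta_{i,k}=\min(2\pi b_{i,k}\lambda^{(u)}G_k/B_i,\,1)$, and splitting according to whether $2\pi\lambda^{(u)}G_k\le B_i/b_{i,k}$ or not reproduces the first two cases. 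The only point requiring genuine care is the identity $L_{i,k}=\lambda^{(u)}\pi_k/\lambda_k$: one must use the \emph{unbiased} count that enters the load (just as $L_i=\lambda^{(u)}/\lambda$, not the size-biased $\bar N=9\lambda^{(u)}/(7\lambda)$, was used for the load in the single tier case), and computing it from the Palm distribution of $\Phi_k$ — rather than by conditioning on a typical UE — automatically avoids the Feller-paradox size bias.
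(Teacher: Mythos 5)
Your proposal is correct and follows essentially the same route as the paper: both reduce the problem to computing $L_{i,k}=\lambda^{(u)}\pi_k/\lambda_k$ by mass conservation between the UE process and the tier-$k$ BS process, substitute $\pi_k=2\pi\lambda_k G_k$ from Lemma~\ref{lem:3}, and then unwind the $\min$ in the definition of $\theta_{i,k}$. The paper phrases the mass-conservation step as a ratio of intensity integrals justified by ergodicity of $\Phi^{(u)}$, whereas you invoke Neveu's exchange formula; these are the same calculation, and your added remarks (that $L_{i,k}$ is band-independent when $x_{i,k}\neq 0$, that the unbiased rather than size-biased count is the right one, and that $0<G_k\le\frac{1}{2\pi\lambda_k}$) are accurate refinements rather than deviations.
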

\begin{proof}
See Appendix \ref{proof:lem:4}.
\end{proof}

Finally, let us consider the admission probability. In $K$-tier HetNets, the mean number of UEs served by the tagged BS depends on which tier it belongs to and is hard to compute exactly. In this paper, we adopt the following approximation proposed in \cite{singh2012offloading} for the mean number of UEs served by the tagged BS in tier $k$:
\begin{align}
\bar{N}_k \cong 1 + \frac{1.28\pi_k \lambda^{(u)}}{\lambda_k}.
\label{eq:r1:18}
\end{align}
With this approximation the admission probability $p_{i,k}$ of tier $k$ in band $i$ is given by
\[ p_{i,k} = \left\{ \begin{array}{ll}
         1  & \mbox{if $x_{i,k}\neq 0 \textrm{ and } 0<     \bar{N}_k  \leq \frac{B_i}{b_{i,k}}$};\\
         \frac{B_i}{b_{i,k}\bar{N}_k }  & \mbox{if $x_{i,k}\neq 0 \textrm{ and } \bar{N}_k > \frac{B_i}{b_{i,k}}$};\\
         0 & \mbox{if $x_{i,k}=0$}.\end{array} \right. \]

\subsection{UE Ergodic Rate}

In this subsection, we extend the proof technique used in Section \ref{sec:multiband} to derive single flow UE ergodic rate in general $K$-tier networks. To begin with, let $J$ be the random tier that the typical UE connects to. Conditioning on $J=k$, $\|Y_{k_0} \| = r$, and  $\{\tilde{\Phi}_{i,\ell}\}$, we obtain the ergodic rate of the CA-enabled UE as 
\begin{align}
&\b E_{H}[R | \{\tilde{\Phi}_{i,\ell}\},   \|Y_{k_0} \|  = r, J=k  ]  \notag \\
&= \b E_{H}  \left[  \sum_{i=1}^M p_{i,k} b_{i,k} \log (1 + \textrm{SINR}_{i,k}  )  \Big |  \{\tilde{\Phi}_{i,\ell}\},  \|Y_{k_0} \| = r, J=k \right] \notag \\
&= \b E_{H}  \left[  \sum_{i=1}^M p_{i,k} b_{i,k} \log \left(1 + \frac{ P_{k} H_{i,k_0} r^{-\alpha_i}   }{ \sum_{\ell \in \mathcal{K}}  I_{\tilde{\Phi}_{i,\ell}} + w_{i,k}  }  \right)  \right] \notag \\
&=   \sum_{i=1}^M p_{i,k} b_{i,k} \int_0^{\infty} \frac{1}{1+t}  \b P \left( \frac{ P_{k} H_{i,k_0} r^{-\alpha_i}   }{\sum_{\ell \in \mathcal{K}}  I_{\tilde{\Phi}_{i,\ell}} + w_{i,k} } \geq t   \right) \dint t, \notag
\end{align}
where 
$
I_{\tilde{\Phi}_{i,\ell}} = \sum_{Y \in \tilde{\Phi}_{i,\ell} \backslash Y_{k_0} } P_{\ell} H_{i,Y}  \parallel Y \parallel^{-\alpha_i}
$ and
\begin{align}
&\b P \left( \frac{ P_{k} H_{i,k_0} r^{-\alpha_i}   }{\sum_{\ell \in \mathcal{K}}  I_{\tilde{\Phi}_{i,\ell}} + w_{i,k} } \geq t   \right)  \notag \\
&=  e^{ -  t w_{i,k} r^{\alpha_i} P_{k} ^{-1}} \prod_{\ell}  \prod_{Y \in \tilde{\Phi}_{i,\ell} \backslash Y_{0} }  \frac{1}{1 +  t r^{\alpha_i} P_k^{-1} P_{\ell} \parallel Y \parallel^{-\alpha_i}  }.
\notag 
\end{align}
Now de-conditioning with respect to $\{\tilde{\Phi}_{i,\ell}\}$ yields
\begin{align}
&\b E[R | \|Y_{k_0} \| = r, J=k ] =   \sum_{i=1}^M p_{i,k} b_{i,k} \int_0^{\infty} \frac{1}{1+t} \notag \\
 &  \cdot e^{ -  t w_{i,k} r^{\alpha_i} P_{k} ^{-1}}  e^{ - \sum_{\ell} \theta_{i,\ell} \lambda_\ell \int_{B(0,\xi_{k,\ell})} 1 - \frac{1}{1 +  t r^{\alpha_i} P_k^{-1} P_{\ell} \|y\|^{-\alpha_i}} \dint y }     \dint t, \notag
\end{align}
where $\xi_{k,\ell}$ is determined by
$
Z_k P_{k} \cdot r^{-\alpha_{k^\star}}  = Z_\ell P_{\ell} \cdot \xi_{k,\ell}^{-\alpha_{\ell^\star}}
$, from which we have $\xi_{k,\ell} = (\frac{Z_\ell P_{\ell}}{Z_k P_{k}})^{\frac{1}{\alpha_{\ell^\star}}} r^{\frac{\alpha_{k^\star}}{\alpha_{\ell^\star}}}$.
It follows that
\begin{align}
&\sum_{\ell} \theta_{i,\ell} \lambda_\ell \int_{B(0,\xi_{k,\ell})} 1 - \frac{1}{1 +  t r^{\alpha_i} P_k^{-1} P_{\ell} \|y\|^{-\alpha_i}} \dint y  \notag \\
&= \sum_{\ell} \pi \theta_{i,\ell}  \lambda_\ell \xi_{k,\ell}^2 \rho \left(t, \alpha_i, \frac{P_k}{P_l} \left(\frac{\xi_{k,\ell}}{r} \right)^{\alpha_i} \right). \notag 
\end{align}

The next step is to de-condition with respect to $\|Y_{k_0} \| = r$. To this end, we need to derive the distribution of the distance $\|Y_{k_0} \|$ \textit{conditioning on $J=k$}, which is given in Lemma \ref{lem:5}.
\begin{lem}
Conditioning on $J=k$, the pdf of the distance $\|Y_{k_0} \|$ is given by
\begin{align}
f_{\|Y_{k_0} \| \mid J } (r) = \frac{1}{\pi_k} 2 \pi \lambda_k  r \cdot e^{ -\pi \sum_{\ell \in \mathcal{K}} \lambda_\ell (\frac{ Z_\ell P_\ell }{Z_k P_k})^{ \frac{2}{\alpha_{\ell^\star}} } r^{\frac{2 \alpha_{k^\star}}{ \alpha_{\ell^\star} }}  }  , r \geq 0.
\label{eq:r1:12}
\end{align}
\label{lem:5}
\end{lem}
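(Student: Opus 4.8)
The plan is to obtain the conditional density from Bayes' rule: compute the joint law of the pair $(\|Y_{k_0}\|, J)$ and divide by $\b P(J=k)=\pi_k$, which Lemma \ref{lem:3} already supplies.

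First I would reduce the single flow association to a per-tier comparison. By Assumption \ref{asum:2} and the argument already used in the proof of Lemma \ref{lem:3}, for each tier $\ell$ the strongest biased received power over the bands it uses is $S_\ell = Z_\ell P_\ell \|Y_{\ell_0}\|^{-\alpha_{\ell^\star}}$ with $\ell^\star = \arg\min_{i\in\mathcal{M}:\, x_{i,\ell}\neq 0}\alpha_i$, and the typical UE connects to $J = \arg\max_{\ell\in\mathcal{K}} S_\ell$. Since the nearest-BS distance $\|Y_{\ell_0}\|$ in a homogeneous PPP of density $\lambda_\ell$ has pdf $2\pi\lambda_\ell r\, e^{-\pi\lambda_\ell r^2}$ and complementary cdf $\b P(\|Y_{\ell_0}\|>u)=e^{-\pi\lambda_\ell u^2}$ (the PPP void probability on $B(0,u)$), and the point processes $\Phi_1,\dots,\Phi_K$ are independent, the distances $\{\|Y_{\ell_0}\|\}_{\ell\in\mathcal{K}}$ are mutually independent.

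Next I would condition on $\{\|Y_{k_0}\| = r\}$ and evaluate $\b P(J=k \mid \|Y_{k_0}\| = r)$. Tier $k$ is selected exactly when $S_\ell < S_k$ for every $\ell \neq k$, i.e. when $\|Y_{\ell_0}\| > \xi_{k,\ell} := (Z_\ell P_\ell/Z_k P_k)^{1/\alpha_{\ell^\star}}\, r^{\alpha_{k^\star}/\alpha_{\ell^\star}}$; ties among the $S_\ell$ have probability zero, so $J$ is a.s.\ well defined. By the independence established above, $\b P(J=k \mid \|Y_{k_0}\| = r) = \prod_{\ell\neq k} e^{-\pi\lambda_\ell \xi_{k,\ell}^2}$, where $\xi_{k,\ell}^2 = (Z_\ell P_\ell/Z_k P_k)^{2/\alpha_{\ell^\star}}\, r^{2\alpha_{k^\star}/\alpha_{\ell^\star}}$. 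Multiplying by the marginal pdf $2\pi\lambda_k r\, e^{-\pi\lambda_k r^2}$ of $\|Y_{k_0}\|$ gives the joint density of $(\|Y_{k_0}\|, J=k)$; since the $\ell=k$ term of $\xi_{k,\ell}^2$ is just $r^2$, the factor $e^{-\pi\lambda_k r^2}$ merges into the product, so this joint density is precisely $2\pi\lambda_k r\, h_k(r)$ with $h_k$ as in Lemma \ref{lem:3}. Dividing by $\pi_k = \b P(J=k) = \int_0^\infty 2\pi\lambda_k r\, h_k(r)\,dr$ yields (\ref{eq:r1:12}).

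I do not expect a serious obstacle; the only points needing care are (i) citing the reduction to a single effective power $S_\ell$ per tier from the proof of Lemma \ref{lem:3} rather than re-deriving it here, and (ii) checking the bookkeeping that the $\ell=k$ summand in the exponent of $h_k$ equals $r^2$, so that the contact-distribution factor of $\Phi_k$ combines cleanly into $h_k$. Everything else is the elementary nearest-point distribution of a PPP, independence across tiers, and Bayes' rule.
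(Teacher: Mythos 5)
Your argument is correct and is essentially the paper's own proof: both reduce the association to the per-tier comparison $Z_\ell P_\ell \|Y_{\ell_0}\|^{-\alpha_{\ell^\star}}$, use independence of the nearest-BS distances across tiers together with the void probability $e^{-\pi\lambda_\ell u^2}$, and obtain the conditional law by dividing the joint law of $(\|Y_{k_0}\|, \{J=k\})$ by $\pi_k$. The only cosmetic difference is that you work directly with the joint density while the paper writes the joint ccdf $\b P(\|Y_{k_0}\|\geq r,\, J=k)$ as an integral and differentiates at the end.
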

\begin{proof}
See Appendix \ref{proof:lem:5}.
\end{proof}
To gain some intuition about the conditional distribution of $\|Y_{k_0} \|$, suppose $\alpha_1 \leq ... \leq \alpha_M$ and that all the tiers use band $1$. Then (\ref{eq:r1:12}) reduces to the following Rayleigh distribution:
\begin{align}
&f_{\|Y_{k_0} \| \mid J } (r) = \notag \\
&2 \pi r \sum_{\ell \in \mathcal{K}} \lambda_\ell (\frac{ Z_\ell P_\ell }{Z_k P_k})^{ \frac{2}{\alpha_{1}} }   \exp( - \pi r^2 \sum_{\ell \in \mathcal{K}} \lambda_\ell (\frac{ Z_\ell P_\ell }{Z_k P_k})^{ \frac{2}{\alpha_{1}} } ), \quad r\geq 0.
\notag 
\end{align}
In particular, its first moment is given by $\b E[ \|Y_{k_0} \| \big | J ] =\sqrt{\pi_k} \cdot \frac{1}{2} \sqrt{\frac{1}{\lambda_k}}$. Note that $\frac{1}{2} \sqrt{\frac{1}{\lambda_k}}$ is the mean distance to the closest BS in tier $k$ from the typical UE \textit{without} conditioning. Interestingly, the mean length of the typical radio link in tier $k$ conditional on $J=k$ equals the unconditional mean $\frac{1}{2} \sqrt{\frac{1}{\lambda_k}}$ multiplied by a factor $\sqrt{\pi_k} \leq 1$. Thus, conditioning reduces the mean distance to the closest BS, agreeing with intuition.

Using Lemma \ref{lem:5}, we now can de-condition on $\|Y_{k_0} \| = r$ and obtain that $\b E[R | J=k ]$ equals
\begin{align}
\sum_{i=1}^M p_{i,k} b_{i,k} & \frac{2\pi \lambda_k }{\pi_k} \int_0^{\infty} \int_0^{\infty}  \frac{1}{1+t}  e^{ -  t w_{i,k} r^{\alpha_i} P_{k} ^{-1}}  \notag \\
& \cdot e^{ - \pi \sum_{\ell}   \lambda_\ell \xi_{k,\ell}^2 \left(\theta_{i,\ell} \rho \left(t, \alpha_i, \frac{P_k}{P_l} \left(\frac{\xi_{k,\ell}}{r} \right)^{\alpha_i} \right) +  1 \right)  } r  \dint r  \dint t. \notag
\end{align}
As a final step, we de-condition on $J = k$ and obtain the following Prop. \ref{pro:r2}.
\begin{pro}[\textbf{Single Flow UE Ergodic Rate}]
The single flow UE ergodic rate is given by 
\begin{align}
\bar{R} 
&=  \sum_{i \in \mathcal{M}}  \sum_{k \in \mathcal{K}} 2 \pi  \lambda_k b_{i,k} p_{i,k}  \int_0^\infty  \int_0^\infty  \frac{g^{(s)}_{i,k} (t, r) h_k (r) r}{1+t}    \dint r \dint t,  
\label{eq:sm}
\end{align}
where  $g_{i,k}^{(s)} ( t , r )$ equals
\begin{align}
e^{ -  t w_{i,k} r^{\alpha_i} P_{k} ^{-1}}  e^{ - \pi \sum_{\ell}   \lambda_\ell   \theta_{i,\ell} (\frac{Z_\ell P_{\ell}}{Z_k P_{k}})^{\frac{2}{\alpha_{\ell^\star}}} \rho \left(t, \alpha_i, \frac{P_k}{P_l} \left(\frac{\xi_{k,\ell}}{r} \right)^{\alpha_i} \right) r^{2\frac{\alpha_{k^\star}}{\alpha_{\ell^\star}}} } .
\label{eq:pro22} 
\end{align}
with $\xi_{k,\ell} = (\frac{Z_\ell P_{\ell}}{Z_k P_{k}})^{\frac{1}{\alpha_{\ell^\star}}} r^{\frac{\alpha_{k^\star}}{\alpha_{\ell^\star}}}$.
\label{pro:r2}
\end{pro}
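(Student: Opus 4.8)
The statement is assembled from the chain of conditionings carried out above, so the plan is to make each de-conditioning step rigorous and then invoke Fubini. First I would condition on the serving tier $J=k$, the serving distance $\|Y_{k_0}\|=r$, and the whole family of thinned interferer processes $\{\tilde\Phi_{i,\ell}\}$. The key point --- this is the ``way of dealing with the multi-band correlation'' advertised in Section \ref{sec:multiband} --- is that although $\tilde\Phi_{i,\ell}$ and $\tilde\Phi_{j,\ell}$ are \emph{not} independent across bands $i\neq j$ (both are thinnings of the common ground process $\Phi_\ell$), the total rate is a \emph{sum} of per-band rates, and the per-band-$i$ rate depends only on $\{\tilde\Phi_{i,\ell}\}_\ell$ together with the band-$i$ fading field; by linearity of expectation the joint law of the $\tilde\Phi_{i,\ell}$ across bands is never needed, only the marginal law of each, which by Lemma \ref{lem:4} is a PPP of density $\theta_{i,\ell}\lambda_\ell$.

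Next I would take the fading expectation band by band: writing $\b E_H[\log(1+\sinr_{i,k})\mid\cdot]=\int_0^\infty\frac{1}{1+t}\b P(\sinr_{i,k}\ge t\mid\cdot)\,dt$ and using $H_{i,k_0}\sim\textrm{Exp}(1)$ turns the conditional SINR tail into a noise factor $e^{-tw_{i,k}r^{\alpha_i}P_k^{-1}}$ times $\prod_\ell\prod_{Y\in\tilde\Phi_{i,\ell}\setminus Y_{k_0}}(1+tr^{\alpha_i}P_k^{-1}P_\ell\|Y\|^{-\alpha_i})^{-1}$, exactly as in (\ref{eq:r1:3}). Then de-condition over $\{\tilde\Phi_{i,\ell}\}$ through the probability generating functional of the PPP: the $\ell$-th factor becomes $\exp\!\big(-\theta_{i,\ell}\lambda_\ell\int_{\b R^2\setminus B(0,\xi_{k,\ell})}\big(1-\frac{1}{1+tr^{\alpha_i}P_k^{-1}P_\ell\|y\|^{-\alpha_i}}\big)\,dy\big)$, where the exclusion ball of radius $\xi_{k,\ell}=(\tfrac{Z_\ell P_\ell}{Z_kP_k})^{1/\alpha_{\ell^\star}}r^{\alpha_{k^\star}/\alpha_{\ell^\star}}$ encodes the single-flow rule: conditioned on $J=k$ and $\|Y_{k_0}\|=r$, no tier-$\ell$ BS may deliver a larger biased received power than the serving BS. Passing to polar coordinates and matching the definition (\ref{eq:r1:5}) rewrites this integral as $\pi\theta_{i,\ell}\lambda_\ell\xi_{k,\ell}^2\,\rho\big(t,\alpha_i,\tfrac{P_k}{P_\ell}(\xi_{k,\ell}/r)^{\alpha_i}\big)$.

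Finally, de-condition over $\|Y_{k_0}\|=r$ using the conditional density $f_{\|Y_{k_0}\|\mid J}(r)$ of Lemma \ref{lem:5}. Since that density is $\frac{1}{\pi_k}2\pi\lambda_k r\,h_k(r)$ with $h_k(r)=\exp(-\pi\sum_\ell\lambda_\ell\xi_{k,\ell}^2)$, the ``$+1$'' sitting next to $\theta_{i,\ell}\rho(\cdot)$ in the intermediate expression for $\b E[R\mid J=k]$ is precisely the exponent contributed by $h_k(r)$; peeling it off isolates the factor $h_k(r)$ and leaves the function $g_{i,k}^{(s)}(t,r)$ of (\ref{eq:pro22}) once $\xi_{k,\ell}^2=(\tfrac{Z_\ell P_\ell}{Z_kP_k})^{2/\alpha_{\ell^\star}}r^{2\alpha_{k^\star}/\alpha_{\ell^\star}}$ is substituted. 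De-conditioning over the serving tier via $\bar R=\sum_{k\in\mathcal{K}}\pi_k\,\b E[R\mid J=k]$ then cancels the $1/\pi_k$, and Fubini's theorem (legitimate since every integrand is nonnegative) interchanges the tier and band sums with the $t$- and $r$-integrals to give (\ref{eq:sm}).

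The main obstacle is identifying the interferer exclusion regions under the single-flow rule with band-dependent path-loss exponents. Because association is decided by the strongest biased power \emph{over all bands}, the tier-$\ell$ exclusion radius is governed by $\alpha_{\ell^\star}=\min_{i:\,x_{i,\ell}\neq0}\alpha_i$ rather than by the band $i$ whose rate is being computed; this yields the slightly unusual exponents $r^{\alpha_{k^\star}/\alpha_{\ell^\star}}$ and forces the third argument of $\rho$ to carry $\tfrac{P_k}{P_\ell}(\xi_{k,\ell}/r)^{\alpha_i}$. Keeping these relations consistent --- and checking that the thinning densities $\theta_{i,\ell}$ from Lemma \ref{lem:4} are exactly the ones to feed into the PGFL --- is where care is needed; the rest mirrors the single-tier computation behind Prop. \ref{pro:r3}.
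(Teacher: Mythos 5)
Your proposal is correct and follows essentially the same route as the paper: condition on $J=k$, $\|Y_{k_0}\|=r$ and the thinned processes, integrate out the Rayleigh fading via the tail formula, apply the PGFL over $\b R^2\setminus B(0,\xi_{k,\ell})$ to get the $\rho$ terms, then de-condition with Lemma \ref{lem:5} (absorbing the ``$+1$'' into $h_k(r)$) and over $J$ so that $\pi_k$ cancels. Your explicit remark that linearity of expectation makes only the marginal laws of the $\tilde\Phi_{i,\ell}$ relevant, and your identification of the $\alpha_{\ell^\star}$-governed exclusion radii, match the paper's (more implicit) handling of the same points.
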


\section{Discussions on Network Deployment}

In this section, we use the derived analytical results to study the HetNet performance through examining two typical band deployment scenarios: orthogonal and cochannel deployment.

\subsection{Orthogonal Deployment}
\label{subsec:ortho}

In this part we assume orthogonal deployment: different tiers use different bands. Without loss of generality, suppose $M=K$ and tier $k$ is matched with  band $k$.   Then Prop. \ref{pro:r2} reduces to the following Corollary \ref{cor:4}.
\begin{cor}
Suppose $M=K$ and orthogonal deployment, i.e., $x_{i,k} = 1$ if $i=k$ and $0$ otherwise. Then  the single flow UE ergodic rate is given by $\bar{R} =   \sum_{k \in \mathcal{K}} \bar{R}_k $, where $\bar{R}_k$ is the rate served by tier $k$ and equals
\begin{align}
\bar{R}_k 
= \pi_k &  b_{k,k} p_{k,k}  \int_0^\infty  \int_0^\infty  \frac{1}{1+t}   e^{ -  t w_{k,k} r^{\alpha_k} P_{k} ^{-1}}  \notag \\
& \cdot e^{ - \pi \lambda_k  \theta_{k,k} \rho \left(t, \alpha_k, 1\right) r^{2} }   f_{\|Y_{k_0} \| \mid  J } (r)  \dint r \dint t,
\label{eq:r1:13}
\end{align}
where $f_{\|Y_{k_0} \| \mid  J } (r)$ is given in Lemma \ref{lem:5} with $\ell^{\star}$ replaced by $\ell$.
Further, if noise is ignored, 
\begin{align}
\bar{R} = \sum_{k \in \mathcal{K}}  \pi_k  b_{k,k} p_{k,k}  \int_0^\infty \frac{1}{1+t} \frac{1}{ 1 + \pi_k \theta_{k,k}  \rho \left(t, \alpha_k, 1\right)  } \dint t. \notag 
\end{align}
\label{cor:4}
\end{cor}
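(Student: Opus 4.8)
The plan is to specialize Proposition~\ref{pro:r2} to the orthogonal configuration ($x_{i,k}=1$ when $i=k$ and $0$ otherwise) and, for the noise-free claim, to evaluate one elementary Gaussian radial integral. First I would note that here $p_{i,k}=0$ for every $i\neq k$ (since $x_{i,k}=0$), so the double sum in~(\ref{eq:sm}) collapses onto its diagonal: $\bar R=\sum_{k\in\mathcal{K}}\bar R_k$ with $\bar R_k=2\pi\lambda_k b_{k,k}p_{k,k}\int_0^\infty\int_0^\infty \frac{g^{(s)}_{k,k}(t,r)\,h_k(r)\,r}{1+t}\dint r\dint t$; this is the natural candidate for ``the rate served by tier $k$''.

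Next I would simplify $g^{(s)}_{k,k}$ in~(\ref{eq:pro22}). Because each tier uses exactly its own band, $k^\star=k$ and $\ell^\star=\ell$. In the interference exponent, $\theta_{k,\ell}$ is the load of tier $\ell$ in band $k$, which is $0$ for all $\ell\neq k$ by Lemma~\ref{lem:4} (band $k$ is unused by those tiers), so only the $\ell=k$ term survives; for it $\xi_{k,k}=(Z_kP_k/Z_kP_k)^{1/\alpha_k}r^{\alpha_k/\alpha_k}=r$, which makes the $\rho$-argument $\frac{P_k}{P_k}(\xi_{k,k}/r)^{\alpha_k}=1$, the prefactor $(Z_kP_k/Z_kP_k)^{2/\alpha_k}=1$, and the radial power $r^{2\alpha_{k^\star}/\alpha_{\ell^\star}}=r^2$. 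Hence $g^{(s)}_{k,k}(t,r)=e^{-t w_{k,k}r^{\alpha_k}P_k^{-1}}e^{-\pi\lambda_k\theta_{k,k}\rho(t,\alpha_k,1)r^2}$. With $k^\star=k$ and $\ell^\star=\ell$ the $h_k$ of Lemma~\ref{lem:3} is exactly the exponential appearing in Lemma~\ref{lem:5}, so $2\pi\lambda_k r\,h_k(r)=\pi_k f_{\|Y_{k_0}\|\mid J}(r)$; substituting this identity and the simplified $g^{(s)}_{k,k}$ into $\bar R_k$ produces~(\ref{eq:r1:13}).

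For the noise-free statement I would set $w_{k,k}\to0$, which removes the first exponential, and evaluate the inner radial integral $\frac{2\pi\lambda_k}{\pi_k}\int_0^\infty r\,e^{-cr^2}h_k(r)\dint r$ with $c=\pi\lambda_k\theta_{k,k}\rho(t,\alpha_k,1)$. Since Lemma~\ref{lem:3} gives $\int_0^\infty r\,h_k(r)\dint r=\pi_k/(2\pi\lambda_k)$, and $h_k(r)=e^{-(\pi\lambda_k/\pi_k)r^2}$ whenever the radial exponent of $h_k$ reduces to a single power $r^2$ (i.e.\ when the participating tiers share one path-loss exponent; otherwise one replaces $h_k$ by the Gaussian matching its first radial moment, in the spirit of~(\ref{eq:r1:18})), the inner integral equals $(1+\pi_k\theta_{k,k}\rho(t,\alpha_k,1))^{-1}$; pulling the constants through and summing over $k$ gives the claimed expression. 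The step I expect to be the main obstacle is precisely this radial reduction: the conditional law~(\ref{eq:r1:12}) is a genuine Rayleigh distribution — which turns $\int r\,e^{-cr^2}h_k(r)\dint r$ into a textbook integral — only when the relevant $\alpha_\ell$ coincide, so with heterogeneous path-loss exponents the noise-free line must be read either as the common-$\alpha$ specialization or as the moment-matched approximation. Everything else — the diagonal collapse, $\theta_{k,\ell}=0$ for $\ell\neq k$, and the bookkeeping $k^\star=k,\ \ell^\star=\ell,\ \xi_{k,k}=r$ — is routine.
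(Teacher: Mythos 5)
Your proposal is correct and follows essentially the same route as the paper, which presents Corollary~\ref{cor:4} as a direct specialization of Proposition~\ref{pro:r2}: the diagonal collapse via $p_{i,k}=0$ and $\theta_{i,\ell}=0$ for off-diagonal indices, the simplification $k^\star=k$, $\ell^\star=\ell$, $\xi_{k,k}=r$, and the identity $2\pi\lambda_k r\,h_k(r)=\pi_k f_{\|Y_{k_0}\|\mid J}(r)$ are exactly the intended steps. Your caveat that the noise-free line reduces to $\bigl(1+\pi_k\theta_{k,k}\rho(t,\alpha_k,1)\bigr)^{-1}$ exactly only when the path-loss exponents coincide (so that $h_k$ is Gaussian in $r$) is a valid and worthwhile observation that the paper leaves implicit, consistent with its later assumption $\alpha_k=\alpha$ for all $k$.
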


With Corollary \ref{cor:4}, we first discuss the impact of biasing  by examining its impact on each term in (\ref{eq:r1:13}). For ease of exposition, we further assume $\alpha_k = \alpha, \forall k$.

\subsubsection{Impact on coverage} As pointed out in Section \ref{subsec:CovAd}, $\pi_k = 1 - \frac{ \sum_{\ell \neq k} \lambda_\ell ( Z_\ell P_\ell )^{ \frac{2}{\alpha} } }{ \lambda_k ( Z_k P_k )^{ \frac{2}{\alpha} } + \sum_{\ell \neq k} \lambda_\ell ( Z_\ell P_\ell )^{ \frac{2}{\alpha} }}$. Thus, increasing biasing factor $Z_k$  increases the coverage $\pi_k$ of tier $k$. On the contrary, $\pi_{\ell}, \ell \neq k$, decreases with $Z_k$ and thus the coverage areas of the other tiers shrink. 

\subsubsection{Impact on admission probabilities} Recall that $p_{k,k}=1$ if tier $k$ has limited coverage such that $\bar{N}_k \leq \frac{B_k}{b_{k,k}}$; otherwise, $p_{k,k}=\frac{B_k}{b_{k,k} \bar{N}_k }$. So if tier $k$ is under-loaded, $p_{k,k}$ remains $1$ with increasing $Z_k$ until $\bar{N}_k = \frac{B_k}{b_{k,k}}$, after which $p_{k,k}$ decreases as $Z_k$ further increases. On the contrary, $p_{\ell, \ell}, \ell \neq k$, is monotonically non-decreasing with increasing $Z_k$ and thus UEs connecting to tier $\ell$ can be scheduled more often, at least unchanged.

\subsubsection{Impact on interference powers} The interfering transmitter density equals $\lambda_k  \theta_{k,k} $. From Lemma \ref{lem:4}, $\theta_{k,k}=\frac{b_{k,k} \lambda^{(u)} \pi_k }{B_k\lambda_k}$ if tier $k$ has limited coverage such that $\pi_k \leq \frac{B_k\lambda_k}{b_{k,k} \lambda^{(u)}}$; otherwise, $\theta_{k,k}=1$. So if tier $k$ is under-loaded, $\theta_{k,k}$ (resp. interference in tier $k$) increases with increasing $Z_k$ until $\pi_k = \frac{B_k\lambda_k}{b_{k,k} \lambda^{(u)}}$, after which $\theta_{k,k}\equiv 1$ (resp. interference in tier $k$ remains constant)  as $Z_k$ further increases. Note the impact of interference on $R_k$ is shown by the term $e^{ - \pi \lambda_k  \theta_{k,k} \rho \left(t, \alpha_k, 1\right) r^{2} }$ in (\ref{eq:r1:13}). On the contrary, $\theta_{\ell, \ell}, \ell \neq k$, is monotonically non-increasing with increasing $Z_k$ and thus interference in tier $\ell$ gets decreased, at least unchanged.

\subsubsection{Impact on the lengths of radio links}  From $f_{\|Y_{k_0} \| \mid  J }$ given in Lemma \ref{lem:5}, we know that 
\begin{align}
\b P( \|Y_{k_0} \| \geq r | J = k ) =  \exp( - \pi r^2 \sum_{\ell \in \mathcal{K}} \lambda_\ell (\frac{ Z_\ell P_\ell }{Z_k P_k})^{ \frac{2}{\alpha_{1}} } ),
\label{eq:r1:16}
\end{align}
which is increasing as $Z_k$ increases. So increasing biasing factor $Z_k$ makes tier $k$ serve more UEs of longer radio links. On the contrary, $\b P( \|Y_{\ell_0} \| \geq r | J = k ) $ is decreasing as $Z_k$ increases; and thus tier $\ell, \ell \neq k$, serves more UEs of shorter radio links.

To sum up, the overall impact of biasing on UE ergodic rate depends on each tier's coverage, admission probability, interference, and lengths of radio links. For a given arbitrary network, it is not \textit{a priori} clear whether biasing improves or hurts its performance. Nevertheless, it is generally believed that increasing the biasing of small cells  makes them accomplish more as loads become more balanced over the tiers; thus, biasing UEs towards small cells can benefit the network as a whole.

\subsection{Cochannel Deployment}
\label{subsec:cochannel}

In this part we focus on single band case and drop the subscript $i$ for ease of notation. Then Prop. \ref{pro:r2} reduces to the following Corollary \ref{cor:5}.
\begin{cor}
Suppose that a single band with path-loss exponent $\alpha$ is used by all the $K$ tiers. Then the UE ergodic rate is given by $\bar{R} =   \sum_{k \in \mathcal{K}} \bar{R}_k $ where 
\begin{align}
&\bar{R}_k 
= \pi_k   b_{k} p_{k}  \int_0^\infty  \int_0^\infty  \frac{1}{1+t}   e^{ -  t w_{k} r^{\alpha} P_{k} ^{-1}}  \notag \\
& \cdot e^{   - r^2 \cdot \pi \sum_{\ell \in \mathcal{K}} \theta_{\ell} \lambda_\ell (\frac{ Z_\ell P_\ell }{Z_k P_k})^{ \frac{2}{\alpha} } \rho(t, \alpha, \frac{Z_\ell}{Z_k})  }   f_{\|Y_{k_0} \| \mid  J } (r)  \dint r \dint t.
\label{eq:r1:17}
\end{align}
where $f_{\|Y_{k_0} \| \mid  J } (r)$ is given in Lemma \ref{lem:5} with $\alpha_{\ell^{\star}}$ replaced by $\alpha$.
Further, if noise is ignored, 
\begin{align}
\bar{R} = \sum_{k \in \mathcal{K}}  \pi_k  b_{k} p_{k}  \int_0^\infty \frac{1}{1+t} \frac{1}{ 1 +  \sum_{\ell} \pi_\ell \theta_\ell  \rho(t, \alpha, \frac{Z_\ell}{Z_k})   } \dint t. \notag 
\end{align}
\label{cor:5}
\end{cor}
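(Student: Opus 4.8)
The plan is to obtain Corollary~\ref{cor:5} as a direct specialization of Proposition~\ref{pro:r2} by setting $M=1$ (a single band used by every tier, so $x_{i,k}=1$ for all $k$) and then unwinding the notation. First I would drop the band index $i$ everywhere, so that $\alpha_i=\alpha$, $b_{i,k}=b_k$, $w_{i,k}=w_k$, $p_{i,k}=p_k$, $\theta_{i,k}=\theta_k$, and crucially $k^\star=\ell^\star=1$ for all tiers since there is only one band to minimize over; hence every occurrence of $\alpha_{k^\star}$ and $\alpha_{\ell^\star}$ becomes $\alpha$. With this, the distance scaling $\xi_{k,\ell}=(\frac{Z_\ell P_\ell}{Z_k P_k})^{1/\alpha}r$, so the ratio $\xi_{k,\ell}/r=(\frac{Z_\ell P_\ell}{Z_k P_k})^{1/\alpha}$ and therefore the third argument of $\rho$ in \eqref{eq:pro22}, namely $\frac{P_k}{P_\ell}(\frac{\xi_{k,\ell}}{r})^{\alpha}=\frac{P_k}{P_\ell}\cdot\frac{Z_\ell P_\ell}{Z_k P_k}=\frac{Z_\ell}{Z_k}$. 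This is the key simplification that produces the $\rho(t,\alpha,\frac{Z_\ell}{Z_k})$ appearing in \eqref{eq:r1:17}.

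Next I would collect the exponent of the interference term. In \eqref{eq:pro22} the exponent is $-\pi\sum_\ell \lambda_\ell\theta_\ell(\frac{Z_\ell P_\ell}{Z_k P_k})^{2/\alpha}\rho(t,\alpha,\frac{Z_\ell}{Z_k})r^{2\alpha_{k^\star}/\alpha_{\ell^\star}}$, and since $\alpha_{k^\star}/\alpha_{\ell^\star}=1$ this is exactly $-r^2\pi\sum_\ell\theta_\ell\lambda_\ell(\frac{Z_\ell P_\ell}{Z_k P_k})^{2/\alpha}\rho(t,\alpha,\frac{Z_\ell}{Z_k})$, matching the second exponential in \eqref{eq:r1:17}. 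Similarly $h_k(r)$ from Lemma~\ref{lem:3} becomes $\exp(-\pi\sum_\ell\lambda_\ell(\frac{Z_\ell P_\ell}{Z_k P_k})^{2/\alpha}r^2)$, and combining the prefactor $2\pi\lambda_k$ with $h_k(r)$ and dividing/multiplying by $\pi_k$ lets me rewrite $2\pi\lambda_k h_k(r)r=\pi_k f_{\|Y_{k_0}\|\mid J}(r)$ using the form of the conditional density in Lemma~\ref{lem:5} with $\alpha_{\ell^\star}$ replaced by $\alpha$; this is how the $\pi_k f_{\|Y_{k_0}\|\mid J}(r)$ grouping in \eqref{eq:r1:17} arises and it also makes the tier-wise decomposition $\bar R=\sum_k\bar R_k$ transparent. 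Splitting the double sum in \eqref{eq:sm} over $i$ and $k$ into just the $k$-sum then gives the claimed form of $\bar R_k$.

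For the noise-free assertion I would set $w_k\to 0$, so the first exponential in \eqref{eq:r1:17} disappears, and then carry out the inner $r$-integral in closed form: with $f_{\|Y_{k_0}\|\mid J}(r)=\frac{1}{\pi_k}2\pi\lambda_k r\exp(-\pi r^2\sum_\ell\lambda_\ell(\frac{Z_\ell P_\ell}{Z_k P_k})^{2/\alpha})$, the $r$-integral is of the Gaussian type $\int_0^\infty 2 a r e^{-a r^2-c r^2}dr=\frac{a}{a+c}$ with $a=\pi\lambda_k(\ldots)$ summed appropriately and $c$ the interference coefficient; using the identity $\pi_k=2\pi\lambda_k\int_0^\infty r h_k(r)dr$ (Lemma~\ref{lem:3}) to identify the normalization, the ratio collapses to $\frac{1}{1+\sum_\ell\pi_\ell\theta_\ell\rho(t,\alpha,Z_\ell/Z_k)}$ after recognizing that $\lambda_\ell(\frac{Z_\ell P_\ell}{Z_k P_k})^{2/\alpha}/\sum_m\lambda_m(\frac{Z_m P_m}{Z_k P_k})^{2/\alpha}$ is, by the single-band form of Lemma~\ref{lem:3} applied with roles swapped, proportional to $\pi_\ell$. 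I expect the main obstacle to be precisely this last bookkeeping step: getting the normalization constants to cancel so that the bare densities and biased powers reorganize into the coverage probabilities $\pi_\ell$, which requires carefully invoking the single-band reduction of Lemma~\ref{lem:3} and being consistent about whether the ``$(\cdot)^{2/\alpha}$'' factors are referenced to tier $k$ or to a common baseline. Everything else is routine substitution and one Gaussian integral.
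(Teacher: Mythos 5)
Your proposal is correct and follows exactly the route the paper intends: Corollary \ref{cor:5} is obtained by specializing Proposition \ref{pro:r2} to $M=1$ (so $\alpha_{k^\star}=\alpha_{\ell^\star}=\alpha$, $\xi_{k,\ell}/r=(\tfrac{Z_\ell P_\ell}{Z_k P_k})^{1/\alpha}$, and the third argument of $\rho$ collapses to $Z_\ell/Z_k$), with $2\pi\lambda_k h_k(r)r=\pi_k f_{\|Y_{k_0}\|\mid J}(r)$ and the Gaussian $r$-integral yielding the noise-free form; your final bookkeeping is right, and in fact $\lambda_\ell(Z_\ell P_\ell)^{2/\alpha}/\sum_m\lambda_m(Z_m P_m)^{2/\alpha}$ equals $\pi_\ell$ exactly, not merely up to proportionality.
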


With Corollary \ref{cor:5}, we now discuss the impact of biasing  by examining its impact on each term in (\ref{eq:r1:17}). It is not difficult to see the impact of biasing on coverage, admission probabilities and the lengths of radio links is the same as the orthogonal deployment case; the difference lies in the interference, on which we shall focus below.

In cochannel case, the interference power experienced by the UEs served by tier $k$ is proportional to the exponent of the third term of the integrand in (\ref{eq:r1:17}), which can be re-written as
\begin{align}
r^2 \pi \left(  \theta_k \lambda_k \rho(t, \alpha, 1)  + \sum_{\ell \neq k} \theta_{\ell} \lambda_\ell (\frac{P_\ell }{P_k})^{ \frac{2}{\alpha} } \int_{\frac{Z_\ell}{Z_k}}^{\infty} \frac{t}{t+x^{\frac{\alpha}{2}}} \dint x  \right).
\notag 
\end{align}
Here the first term in the above parentheses indicates the intra-tier interference level and is non-decreasing with increasing $Z_k$. If tier $k$ is already fully-loaded, i.e., $\theta_k = 1$, increasing biasing does not change the intra-tier interference; otherwise, increasing biasing makes BSs of tier $k$ more active and thus increases intra-tier interference.

As for the inter-tier interference e.g. from tier $\ell$, the impact of increasing $Z_k$ is more subtle: as $Z_k$ increases, $\theta_{\ell}$  decreases but $\int_{{Z_\ell}/{Z_k}}^{\infty} \frac{t}{t+x^{\frac{\alpha}{2}}} \dint x$ increases. To explain this subtle phenomenon,
we partition the UEs connecting to tier $k$ into two groups: Group I consists of the original UEs connecting to tier $k$ before increasing  $Z_k$ and Group II consists of the new UEs connecting to tier $k$ which are biased from other tiers due to the increased  $Z_k$. Increasing $Z_k$ makes other tiers  less active and thus UEs of Group I will experience less interference; in contrast, increasing $Z_k$ brings UEs of Group II closer to the inter-tier interferers, though interferers are less active. 
To sum up, increasing $Z_k$ can either increase or decrease the inter-tier interference from tier $\ell$; the answer depends on the trade-off between the two factors mentioned above. As a special case, if tier $\ell$ remains fully loaded when $Z_k$ increases moderately, then  the inter-tier interference from tier $\ell$ increases.

Next we compare Corollary \ref{cor:5} (i.e., $1$-band-$K$-tier deployment) to Prop. \ref{pro:r3} (i.e., $K$-band-$1$-tier deployment). This comparison is interesting as it will demonstrate the advantages/disadvantages of two popular approaches for increasing rate in cellular networks: spatial reuse with small cells versus adding more bandwidth. To this end, the following result that follows from Prop. \ref{pro:r3} and  Corollary \ref{cor:5} is instrumental.
\begin{pro}
Suppose the following assumptions are satisfied:
\begin{enumerate}
\item The network is interference-limited, i.e., noise is ignored;
\item The UE density $\lambda^{(u)}$ is large enough;
\item No biasing factors are applied, i.e., $Z_k = 1, \forall k$; \footnote{This is a natural assumption in a fully-loaded network for maximizing sum rate.}
\item All the bands have the same path-loss exponent $\alpha$ and bandwidth $B$;
\end{enumerate}
Then the UE ergodic rates of $1$-band-$K$-tier and $K$-band-$1$-tier  deployments  are respectively given by
\begin{align}
\bar{R}_{1-K} &=\sum_{k=1}^K \lambda_k \cdot \frac{0.78 B}{\lambda^{(u)}}\int_0^{\infty} \frac{1}{1+t} \frac{1}{1+ \rho(t,\alpha,1)} \dint t \notag \\
\bar{R}_{K-1} &=K \lambda_1 \cdot \frac{0.78 B}{\lambda^{(u)}}\int_0^{\infty} \frac{1}{1+t} \frac{1}{1+ \rho(t,\alpha,1)} \dint t . 
\end{align}
\label{pro:1}
\end{pro}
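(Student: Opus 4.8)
The plan is to specialize the two general ergodic-rate formulas already in hand — Prop.~\ref{pro:r3} for the single-tier, $M$-band network (which will give $\bar{R}_{K-1}$) and Corollary~\ref{cor:5} for the single-band, $K$-tier network (which will give $\bar{R}_{1-K}$) — to the interference-limited, unbiased, homogeneous-band regime, and then to let $\lambda^{(u)}\to\infty$ so that every tier/band becomes fully loaded and the admission probabilities attain their heavy-traffic form. Throughout, I write $I:=\int_0^\infty \frac{1}{1+t}\cdot\frac{1}{1+\rho(t,\alpha,1)}\dint t$ for the common integral (this is the $q(\alpha)$ of Section~\ref{sec:multiband}).

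\textbf{Step 1 ($\bar{R}_{K-1}$).} Start from the $w_i\to 0$ reduction of Prop.~\ref{pro:r3}, $\bar{R}=\sum_{i=1}^M p_i b_i \int_0^\infty \frac{1}{(1+t)(1+\theta_i\rho(t,\alpha_i,1))}\dint t$, and impose $M=K$, $\alpha_i=\alpha$, $B_i=B$ (Assumption~4). Assumption~2 forces $\theta_i=\min(\lambda^{(u)}b_i/(\lambda B),1)=1$ for every band, so each integral collapses to $I$; the same assumption makes the tagged BS overloaded, so the Feller-paradox admission probability (built from $\bar{N}=9\lambda^{(u)}/(7\lambda)$) is $p_i=\frac{7\lambda B}{9\lambda^{(u)}b_i}$, whence $p_ib_i=\frac{7}{9}\cdot\frac{\lambda B}{\lambda^{(u)}}\approx 0.78\,\frac{\lambda B}{\lambda^{(u)}}$, independent of $i$ and of $b_i$. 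Summing the $K$ identical terms and identifying the single-tier ground density with $\lambda_1$ yields $\bar{R}_{K-1}=K\lambda_1\cdot\frac{0.78B}{\lambda^{(u)}}\,I$.

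\textbf{Step 2 ($\bar{R}_{1-K}$).} Start from the $w\to 0$ form of Corollary~\ref{cor:5}, $\bar{R}=\sum_{k\in\mathcal{K}}\pi_k b_k p_k \int_0^\infty \frac{1}{1+t}\cdot\frac{1}{1+\sum_\ell \pi_\ell\theta_\ell\rho(t,\alpha,Z_\ell/Z_k)}\dint t$. Assumption~3 ($Z_k\equiv 1$) turns every $\rho(t,\alpha,Z_\ell/Z_k)$ into $\rho(t,\alpha,1)$; Assumption~2 forces $\theta_\ell=1$ for all $\ell$ by Lemma~\ref{lem:4}; and since $\sum_\ell\pi_\ell=1$ the denominator becomes $1+\rho(t,\alpha,1)$, so each tier's integral is again $I$. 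For the prefactor, large $\lambda^{(u)}$ makes every tier overloaded, so $p_k=\frac{B}{b_k\bar{N}_k}$ with $\bar{N}_k\cong 1+\frac{1.28\pi_k\lambda^{(u)}}{\lambda_k}\approx\frac{1.28\pi_k\lambda^{(u)}}{\lambda_k}$ by (\ref{eq:r1:18}); the key cancellation is $\pi_k b_k p_k=\pi_k b_k\cdot\frac{B\lambda_k}{1.28\pi_k b_k\lambda^{(u)}}=\frac{\lambda_k B}{1.28\lambda^{(u)}}\approx 0.78\,\frac{\lambda_k B}{\lambda^{(u)}}$, in which $\pi_k$ and $b_k$ both drop out (and $1/1.28\approx 7/9$ is the same constant as in Step~1, consistent with $1.28\approx 9/7$ being the single-tier Feller constant reused as an approximation). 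Summing over $k$ gives $\bar{R}_{1-K}=\big(\sum_{k=1}^K\lambda_k\big)\cdot\frac{0.78B}{\lambda^{(u)}}\,I$, the claimed expression.

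\textbf{Main obstacle.} The only delicate point — bookkeeping rather than depth — is making ``$\lambda^{(u)}$ large enough'' precise and uniform: one must exhibit a single threshold above which, simultaneously for all finitely many pairs $(i,k)$, the cap in $\theta_{i,k}=\min(\cdot,1)$ is active, the tagged BS is overloaded (so $p_k$ lies in its $p_k<1$ branch), and the additive $+1$ in $\bar{N}_k$ is negligible against the $\lambda^{(u)}$-linear term; one should also flag the harmless roundings $0.78\approx 7/9$ and $1.28\approx 9/7$ and the fact that $\bar{N}_k$ is itself the approximation (\ref{eq:r1:18}) of \cite{singh2012offloading}. Granting this, the comparison is immediate: the two rates share the common factor $\frac{0.78B}{\lambda^{(u)}}\,I$, so $\bar{R}_{1-K}/\bar{R}_{K-1}=\big(\sum_{k=1}^K\lambda_k\big)/(K\lambda_1)$, and the $1$-band-$K$-tier deployment wins exactly when the aggregate low-power-node density exceeds $(K-1)\lambda_1$ — the design conclusion the paper draws next.
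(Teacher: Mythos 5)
Your proposal is correct and follows exactly the route the paper intends: Prop.~\ref{pro:1} is stated as following from Prop.~\ref{pro:r3} and Corollary~\ref{cor:5}, and your specialization (setting $\theta \equiv 1$ and taking the overloaded branch of the admission probabilities for large $\lambda^{(u)}$, using $Z_k\equiv 1$ and $\sum_\ell \pi_\ell = 1$ to collapse the interference term, and cancelling $\pi_k$ and $b_k$ in the prefactor) is precisely that derivation, including the correct identification of $7/9$ and $1/1.28$ with the rounded constant $0.78$.
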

Interestingly, $K$-tier-$1$-band deployment can have higher rate than that of  $1$-tier-$K$-band deployment as long as $\frac{\sum_{k=1}^K \lambda_k}{K \lambda_1} \geq 1$ which can be satisfied by making $\lambda_k \geq \lambda_1, \forall k$. Further,
the gain $\frac{\sum_{k=1}^K \lambda_k}{K \lambda_1}$ increases proportionally with the BS densities $\lambda_k, k \neq 1$; in contrast, the gain decreases if $\lambda_1$ increases, agreeing with intuition: small cells become less useful if the macro BSs are deployed more densely. To sum up, Prop. \ref{pro:1} gives a theoretical support to the current vast interest in adding small cells to cellular networks. In particular, it demonstrates the potential of small cells in solving the current ``spectrum crunch''. 

With similar assumptions as in Prop. \ref{pro:1}, we can also derive the UE ergodic rate of orthogonal $K$-band-$K$-tier deployment as
\begin{align}
\bar{R}^{(\perp)}_{K-K} &=\sum_{k=1}^K \lambda_k \cdot \frac{0.78 B}{\lambda^{(u)}}\int_0^{\infty} \frac{1}{1+t} \frac{1}{1+ \pi_k \rho(t,\alpha,1)} \dint t, \notag
\end{align}
where the superscript $\perp$ denotes orthogonal deployment. Similarly, the UE ergodic rate $\bar{R}^{(co)}_{K-K}$ of cochannel $K$-band-$K$-tier  deployment is given by
\begin{align}
\bar{R}^{(co)}_{K-K} &=\sum_{k=1}^K \lambda_k \cdot \frac{0.78 B}{\lambda^{(u)}}\int_0^{\infty} \frac{1}{1+t} \frac{K}{1+ \rho(t,\alpha,1)} \dint t. \notag 
\end{align}
Typically, $\bar{R}^{(co)}_{K-K} \geq \bar{R}^{(\perp)}_{K-K}$. However, the performance gap can be reduced by exploiting the additional design dimension of orthogonal $K$-band-$K$-tier deployment: One can appropriately tune $\{\pi_k\}$ (e.g. by optimizing biasing factors) to get larger rate.

\section{Simulation and Numerical Results}
\label{sec:sim}

In this section, we provide numerical results to demonstrate the analytical results. The main set-up is a typical HetNet consisting of 2 tiers (macro and small BSs) and 2 bands ($800$MHz and $2.5$GHz band). The specific parameters used are summarized in Table \ref{tab:sys:para} unless otherwise specified. We further introduce a shorthand notation $[x_{1,1}, x_{2,1}; x_{1,2}, x_{2,2}]$ to denote the band deployment configuration for ease of description. For example, configuration $[1, 1; 1, 0]$ denotes that tier 1 uses both band 1 and band 2 while tier 2 only uses  band 1.

\begin{table}
\caption{System Parameters}
\centering
\begin{tabular}{|l||r|} \hline
Density of macro BSs $\lambda_1$ & $(\pi 500^2)^{-1}$ m$^{-2}$  \\ \hline 
Density of small BSs $\lambda_2$ & $2 \times (\pi 500^2)^{-1}$ m$^{-2}$ \\ \hline
Density of UEs $\lambda^{(u)}$   & $\lambda^{(u)} = 12  \lambda_2$ \\ \hline 
Max Tx power of macro BSs & $40$ W \\ \hline
Max Tx power of small BSs & $1$ W \\ \hline
Noise PSD & $-174$ dBm \\ \hline 
Noise figure & $6$ dB \\ \hline \hline
$800$MHz band's wavelength $\mu_1$ & 0.375 m \\ \hline
$800$MHz band's path loss exp. $\alpha_1$ & 3 \\ \hline
$800$MHz band's  bandwidth $B_1$ & $9$ MHz \\ \hline
$800$MHz band's  biasing $Z_1$   & $0$ dB \\ \hline
$2.5$GHz band's wavelength $\mu_2$ & 0.12 m \\ \hline
$2.5$GHz band's path loss exp. $\alpha_2$ & 4 \\ \hline
$2.5$GHz band's  bandwidth $B_2$ & $9$ MHz \\ \hline 
$2.5$GHz band's  biasing $Z_2$   & $0$ dB \\ \hline 
UE min BW req. $b_{i,k} \equiv b$ & $1.8$ MHz \\ \hline \hline 
Band deployment &  $[1, 0; 0, 1]$ \\ \hline 
\end{tabular}
\label{tab:sys:para}
\end{table}

\subsection{Monte Carlo Simulations}

We first validate the derived rate expressions via Monte Carlo simulations. The set-up is a square area $A$ of size $|A| = 20\times 20$ km$^2$, where $\lambda_1 = 0.5\times (\pi 500^2)^{-1}$, $\lambda_2 = 2 \lambda_1$, and $\lambda^{(u)}$ ranges from $5\lambda_1$ to $49\lambda_1$. With this set-up, the mean numbers of macro BSs and small BSs are $255$ and $510$, respectively, and   the mean number of UEs ranges from $1275$ to $12495$. Such a large network is simulated to make boundary effect negligible. Then for each $\lambda^{(u)}$, the simulation steps are as follows.
\begin{enumerate}
\item Generate a random number $N_k$ for each tier $k$ such that $N_k \sim \textrm{Poisson} ( \lambda_k  |A|)$.
\item Generate $N_k$ points that are uniformly distributed in $A$; these $N_k$ points represent the BSs in tier $k$.
\item With a similar approach as in the previous two steps, generate $N^{(u)}$ points representing the UEs.
\item Generate independently the fading gain from each BS to each UE.
\item Associate each UE to some BS using the single flow association policy introduced in Section \ref{subsec:user}.
\item For each BS, UE scheduling is done as follows: If there are more than $B_i/b = 5$ associated UEs, the BS randomly picks up $5$ out of them to serve; otherwise, there are more subchannels than the number of UEs (say $n$) associated and correspondingly the BS randomly picks up $n$  out of the $5$ subchannels to serve these $n$ UEs.
\item Compute the rate of each UE: The rate equals $b \log (1 + \sinr )$ if the UE is scheduled; otherwise, the rate is $0$.
\item Compute the average rate $\bar{R}(t)$ (where $t$ is the iteration index) by averaging over all the UE rates.
\item Repeat the above steps for $100$ times and then compute the ergodic rate as $\bar{R} = \frac{1}{100} \sum_{t=1}^{100} \bar{R}(t)$.
\end{enumerate}

The comparison results are shown in Fig. \ref{fig:4}, where we simulate each band separately. This is justified by the analysis in Section \ref{sec:multiband} which shows that spatial correlation naturally decouples over the bands when single flow CA is supported. Fig. \ref{fig:4} shows that the analytical results match the empirical results fairly well in the single tier network. However, small gaps exist between analysis and simulation in the two-tier network. These gaps arise mainly because of the approximation  (\ref{eq:r1:18}) used in the rate expression of the multi-tier networks.
Nevertheless, the analytical results provide a pretty good estimate for the simulation results, and is especially attractive for studying large networks where simulation can be quite time-consuming.

\begin{figure}
\centering
\includegraphics[width=8.5cm]{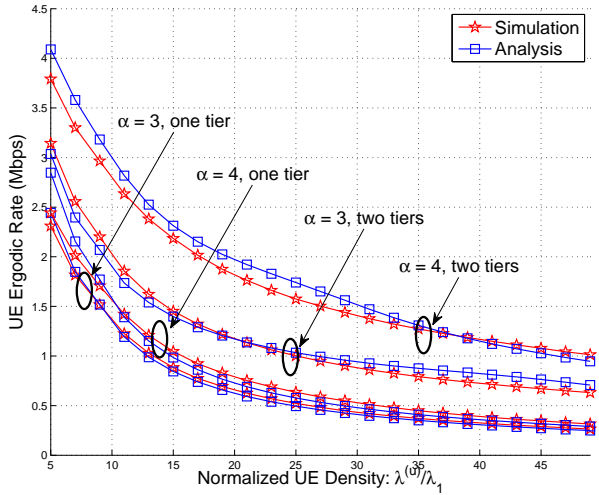}
\caption{Analytical results versus Monte Carlo simulation results}
\label{fig:4}
\end{figure}

\subsection{Biasing Effect}

In this subsection, we explore the biasing effect on network performance. First, we show the network sum rate, which equals UE ergodic rate multiplied by UE density, as a function of normalized density of small cells in Fig. \ref{fig:6}. It is shown that \textit{network capacity increases almost linearly as the density of small cells increases}, which shows the promise of small cells. Further, appropriate biasing values (e.g. 10dB in Fig. \ref{fig:6}) can  help small cells accomplish more by increasing the sum ate. Interestingly, biasing does not help when the densities of small cells (proportional to UE densities) are either too small or too large: In the former case the load of macro BSs is not heavy and thus offloading is not needed; in the latter case the load of small BSs is already heavy enough and thus offloading can hardly increase the sum rate.

\begin{figure}
\centering
\includegraphics[width=8cm]{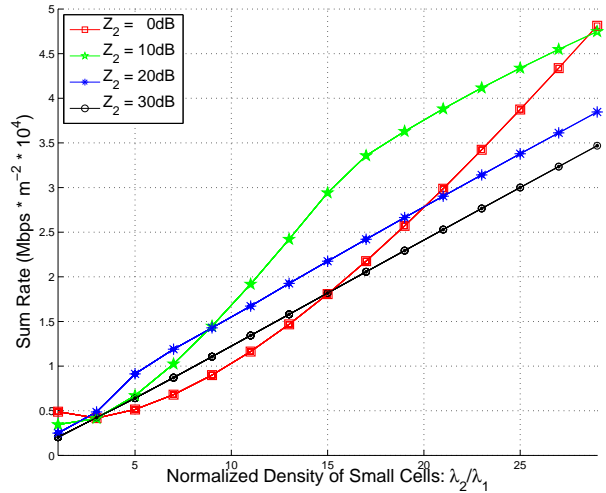}
\caption{Impact of the density of small cells and biasing on network sum rate under orthogonal deployment.}
\label{fig:6}
\end{figure}

We next provide a finer view about the effect of biasing in Fig. \ref{fig:71} and \ref{fig:72}. For brevity, we fix the UE density as $\lambda^{(u)}/\lambda_1=24$ and assume all the channels are in the 800MHz frequency band. As shown in Fig. \ref{fig:71}, the impact of biasing on coverage and load is consistent with our previous high level discussions in Section \ref{subsec:ortho}. The impact of biasing on spectral efficiency, a function of interference powers and lengths of radio links, is shown in Fig. \ref{fig:72}. As implied by the discussion in Section \ref{subsec:ortho}, with increasing biasing of small cells, the spectral efficiency of small cells decreases while the converse is true for the spectral efficiency of macro cells. However, the situation becomes more complicated when it comes to the cochannel deployment case. For example, with increasing biasing of small cells, the spectral efficiency of small cells first decreases and then increases. This subtle effect of biasing can be understood along the reasoning presented in Section \ref{subsec:cochannel}.

\begin{figure}
\centering
\includegraphics[width=8cm]{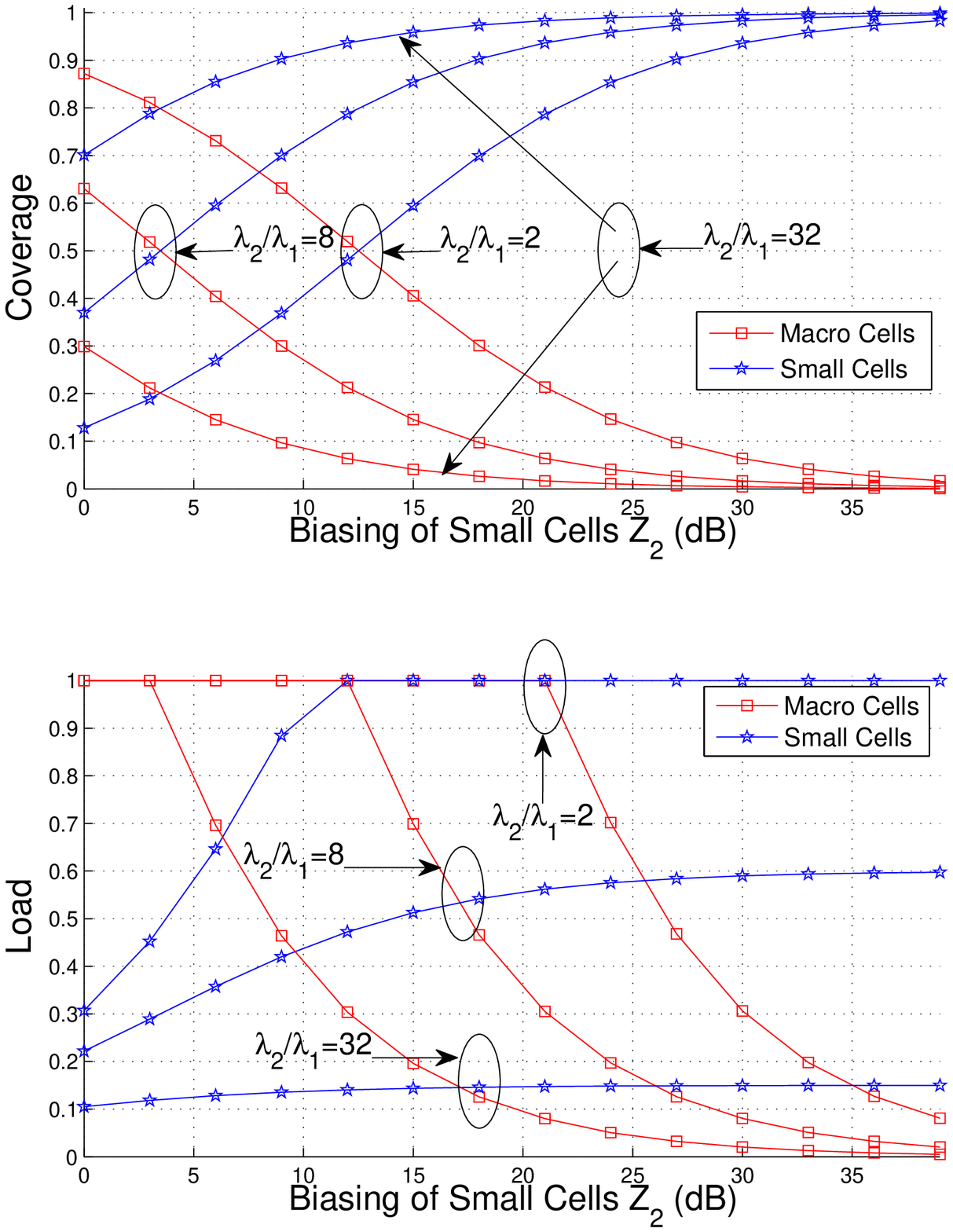}
\caption{Impact of biasing on coverage and load under orthogonal deployment.}
\label{fig:71}
\end{figure}

\begin{figure}
\centering
\includegraphics[width=8cm]{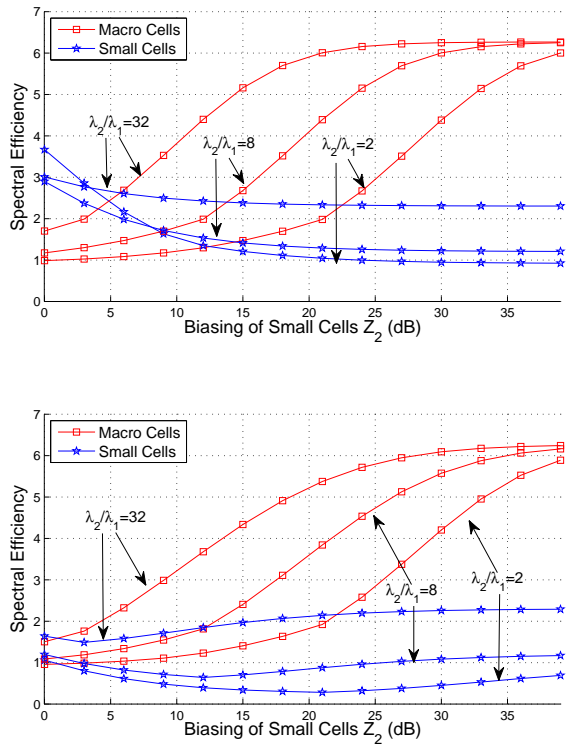}
\caption{Impact of biasing on spectral efficiency: The top subfigure is for orthogonal deployment while the bottom subfigure is for cochannel deployment.}
\label{fig:72}
\end{figure}

\subsection{Band Deployment}

In this subsection, we study how different band deployment configurations affect UE ergodic rate. Since in reality the 800MHz band has to be deployed in macro cells to ensure coverage, we only focus on those band deployments with $x_{1,1}=1$.

Fig. \ref{fig:66} and \ref{fig:1101} show the UE ergodic rates under band deployments $[1, 0; 0, 1]$ and $[1, 1; 0, 1]$, respectively. Under these two deployments, small cells do not use the 800MHz band and thus have much smaller coverage than macro cells. Hence, biasing can increase the UE ergodic rate by expanding the coverage areas of small cells; this is true for moderate densities of small cells, as shown in Fig. \ref{fig:66} and \ref{fig:1101}. For example, in the case of $\lambda_2/\lambda_1=8$ shown in Fig. \ref{fig:66}, an almost $2\times$ rate gain is achieved with $17$dB biasing.  Further, the optimal biasing value is sensitive to the  densities of small cells; in general, it decreases as $\lambda_2/\lambda_1$ increases. Note that Fig. \ref{fig:66} and \ref{fig:1101} agree with Fig.  \ref{fig:6} that biasing does not help when the densities of small cells and UEs are either too small or too large.

\begin{figure}
\centering
\includegraphics[width=8cm]{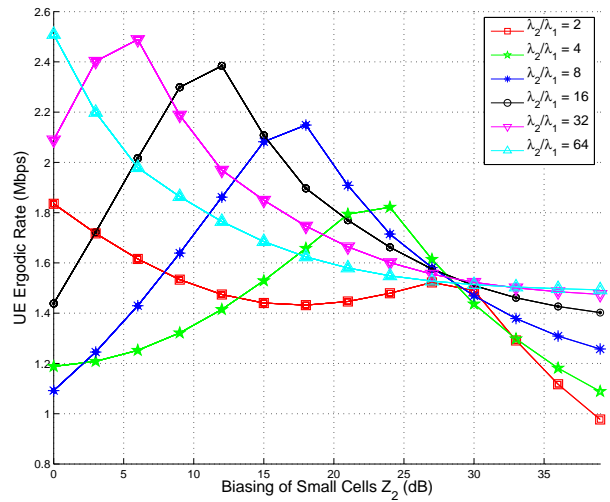}
\caption{UE ergodic rate vs. biasing under band deployment $[1, 0; 0, 1]$}
\label{fig:66}
\end{figure}

\begin{figure}
\centering
\includegraphics[width=8cm]{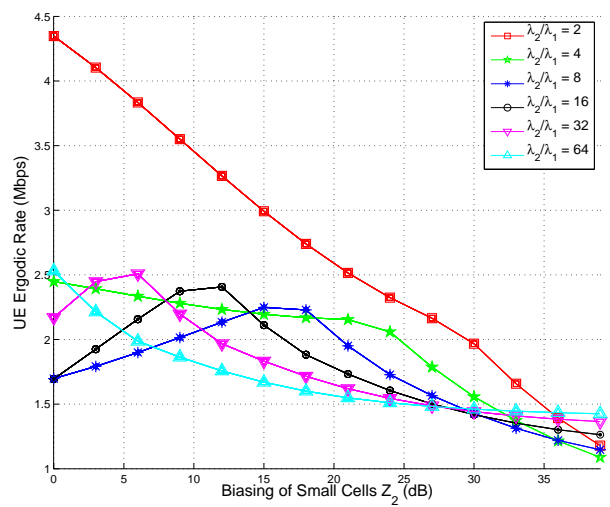}
\caption{UE ergodic rate vs. biasing under band deployment $[1, 1; 0, 1]$}
\label{fig:1101}
\end{figure}

In contrast, Fig. \ref{fig:1110} shows the UE ergodic rates under band deployment $[1, 1; 1, 0]$, where small cells use the 800MHz band instead of the 2.5GHz band. In this case, small cells have much better coverage than their counterparts under $[1, 0; 0, 1]$ and $[1, 1; 0, 1]$. Thus, biasing is not that helpful; indeed, zero biasing value is optimal as shown in Fig. \ref{fig:1110}. However, the UE ergodic rates under $[1, 1; 1, 0]$ are lower than those under $[1, 0; 0, 1]$ or $[1, 1; 0, 1]$. That is, though better coverage can be obtained under $[1, 1; 1, 0]$, the small cells still do not accomplish much as they experience severe intra and cross tier interference in the 800MHz band. 

\begin{figure}
\centering
\includegraphics[width=8cm]{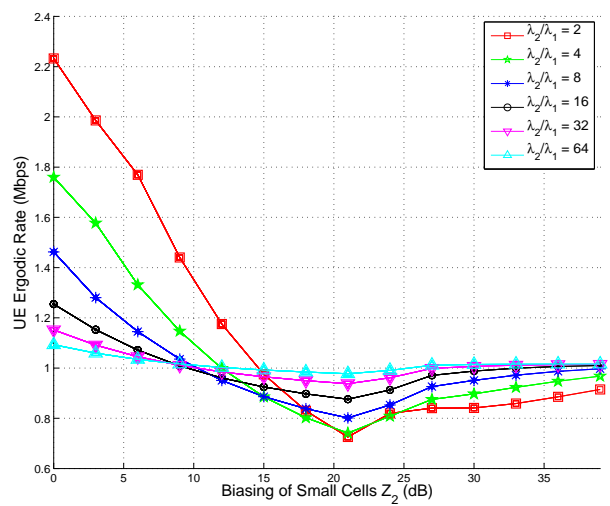}
\caption{UE ergodic rate vs. biasing under band deployment $[1, 1; 1, 0]$}
\label{fig:1110}
\end{figure}

Fig. \ref{fig:1011} and \ref{fig:1111} show the UE ergodic rates under band deployments $[1, 0; 1, 1]$ and $[1, 1; 1, 1]$, respectively. For similar reason as in the deployment $[1, 1; 1, 0]$, biasing is not helpful here; indeed, zero biasing value is optimal as shown in Fig. \ref{fig:1011} and \ref{fig:1111}. However, here we mainly take a sum-rate perspective; biasing may still be useful under $[1, 0; 0, 1]$ and $[1, 1; 0, 1]$ from other perspectives e.g. rate of cell-edge users. In addition, compared to the other 4 deployments, deployment $[1, 1; 1, 1]$ yields the largest UE ergodic rates in nearly all the scenarios of small cell densities, while $[1, 0; 1, 1]$ is almost as good as $[1, 1; 1, 1]$ except the low small cell density regime, i.e., $\lambda_2/\lambda_1 = 2$.

\begin{figure}
\centering
\includegraphics[width=8cm]{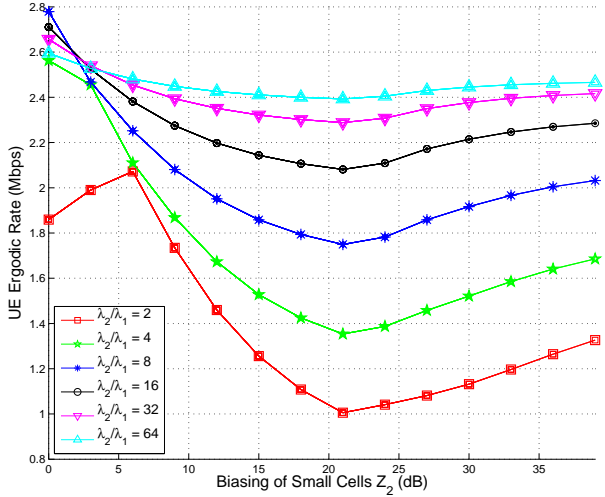}
\caption{UE ergodic rate vs. biasing under band deployment $[1, 0; 1, 1]$}
\label{fig:1011}
\end{figure}

\begin{figure}
\centering
\includegraphics[width=8cm]{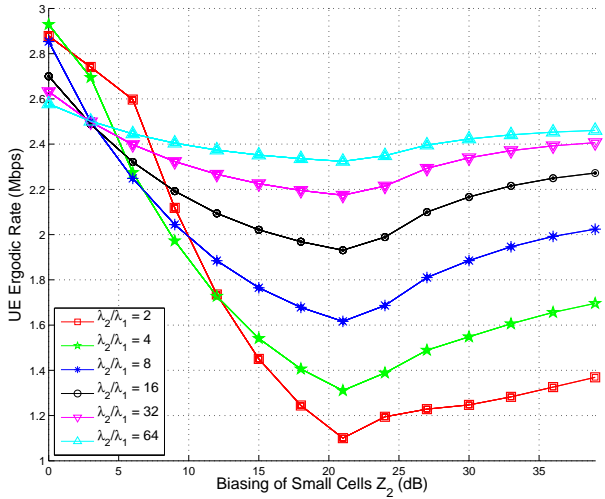}
\caption{UE ergodic rate vs. biasing under band deployment $[1, 1; 1, 1]$}
\label{fig:1111}
\end{figure}

Before ending this subsection, we remark that the above comparison of different band deployment scenarios is conducted purely from a sum-rate perspective, which is just one aspect of the network
design. In reality, deployment choices are made by jointly considering many other factors as well. In addition, interference management, which may affect the effect of biasing, is not incorporated.

\section{Conclusions}
\label{sec:conclusion}

This paper focuses on two central issues in CA-enabled HetNets: biasing and band deployment. To this end, we have proposed a general yet tractable $M$-band $K$-tier load-aware model. This new model provides a better characterization of the biasing effect and yields new design insights. Further, different band deployment configurations have also been studied and compared under the proposed model; the results reveal that universal cochannel deployment is the right goal to pursue. This work can be extended in a number of ways. In particular, many other distinct features such as multiple-input multiple-output (MIMO) technique, coordinated multi-point transmission/reception (CoMP) and device-to-device (D2D) communications can be jointly studied with CA.
 
\section*{Acknowledgment}

The authors thank Rapeepat Ratasuk and Bishwarup Mondal of Nokia Siemens Networks  for their help in identifying the problem and refining system model and numerical results. The authors also thank the anonymous reviewers for their valuable comments and suggestions, which helped the authors significantly improve the quality of the paper. 

\appendix

\subsection{Proof of Lemma \ref{lem:3}}
\label{proof:lem:3}

This proof follows the approach used in \cite{jo2011tractable}. In single flow CA, the typical UE scans over all the tiers and bands and then connects to the tier $k$ that provides the strongest biased received power in some band say $k^\star \in \mathcal{M}$. Then the UE performs CA with respect to this selected tier. Let $J$ be the random tier that the UE connects to. Then
by Assumption \ref{asum:2}, we have
\begin{align}
&\pi_k = \mathbb{P}(   J = k ) = \mathbb{P} ( Z_k P_{k} \|Y_{k_0} \|^{ - \alpha_{k^\star} } \notag \\
&= \max ( Z_\ell P_\ell \| Y_{\ell_0} \|^{ - \alpha_i } :    \ell \in \mathcal{K} , i \in \mathcal{M} )  ) \notag  \\
&= \mathbb{P} ( Z_k P_{k} \|Y_{k_0} \|^{ - \alpha_{k^\star} }  = \max ( Z_\ell P_{\ell} \| Y_{\ell_0} \|^{ - \alpha_{\ell^\star} } :    \ell \in \mathcal{K} )  )  \notag \\
&= \mathbb{P} ( Z_k P_{k} \|Y_{k_0} \|^{ - \alpha_{k^\star} }  \geq Z_\ell P_{\ell} \| Y_{\ell_0} \|^{ - \alpha_{\ell^\star} } , \forall \ell \neq k     )  \notag \\
&= \prod_{\ell \neq k} \b P ( \pi (Z_\ell P_\ell)^{ -\frac{2}{\alpha_{\ell^\star}} } \| Y_{\ell_0} \|^2 \geq \pi (Z_kP_k)^{ -\frac{2}{\alpha_{\ell^\star} } } \| Y_{k_0} \|^{\frac{2 \alpha_{k^\star}}{ \alpha_{\ell^\star} }}   ) , 
\label{eq:r1:9}
\end{align}
where  the last equality follows from the independence of $\{ \|Y_{\ell_0}\| \}$ (because the $K$ ground PPPs are independent). It is known that $\|Y_{\ell_0} \|$ is Rayleigh distributed (c.f. (\ref{eq:r1:8})) with pdf
$ f_{\|Y_{\ell_0} \| } ( r ) = 2 \pi \lambda_\ell r e^{ - \pi \lambda_\ell r^2 }, \ r \geq 0$. It follows that
\begin{align}
&\b P (  \pi  \|Y_{\ell_0} \|^{ 2}   (Z_\ell P_{\ell})^{- \frac{ 2 }{ \alpha_{\ell^\star}  } }  \geq x ) = \exp ( - (Z_\ell P_{\ell})^{ \frac{ 2 }{ \alpha_{\ell^\star}  } } \lambda_\ell  x ), \forall \ell \in \mathcal{K}. \notag
\end{align}
Using the above equality  and conditioning on $\| Y_{k_0} \| = r$ yields
\begin{align}
&\b P (  \pi  \|Y_{\ell_0} \|^{ 2}   (Z_\ell P_{\ell})^{- \frac{ 2 }{ \alpha_{\ell^\star}  } }  \geq \pi (Z_kP_k)^{ -\frac{2}{\alpha_{\ell^\star} } } r^{\frac{2 \alpha_{k^\star}}{ \alpha_{\ell^\star} }}  ) \notag  \\
 &= \exp( - \pi \lambda_\ell (\frac{ Z_\ell P_\ell }{Z_k P_k})^{ \frac{2}{\alpha_{\ell^\star}} } r^{\frac{2 \alpha_{k^\star}}{ \alpha_{\ell^\star} }}  ).
\label{eq:r1:11}
\end{align}
De-conditioning on $\| Y_{k_0} \| = r$ yields
\begin{align}
&P( \pi (Z_\ell P_\ell)^{ -\frac{2}{\alpha_{\ell^\star}} } \| Y_{\ell_0} \|^2 \geq \pi (Z_kP_k)^{ -\frac{2}{\alpha_{\ell^\star}} } \| Y_{k_0} \|^{\frac{2 \alpha_{k^\star}}{ \alpha_{\ell^\star} }}   )    \notag \\
&= \int_{0}^\infty \exp( -\pi \lambda_\ell (\frac{ Z_\ell P_\ell }{Z_k P_k})^{ \frac{2}{\alpha_{\ell^\star}} } r^{\frac{2 \alpha_{k^\star}}{ \alpha_{\ell^\star} }}  ) \cdot 2\pi \lambda_k r \exp( -\pi \lambda_k r^2 ) \dint r .
\label{eq:r1:10}
\end{align}
Plugging (\ref{eq:r1:10}) into (\ref{eq:r1:9}) yields that $\pi_k $ equals
\begin{align}
&\prod_{l\neq k} \int_{0}^\infty \exp( -\pi \lambda_\ell (\frac{ Z_\ell P_\ell }{Z_k P_k})^{ \frac{2}{\alpha_{\ell^\star}} } r^{\frac{2 \alpha_{k^\star}}{ \alpha_{\ell^\star} }}  ) \cdot 2\pi \lambda_k r \exp( -\pi \lambda_k r^2 ) \dint r \notag \\
&=  2\pi \lambda_k \int_{0}^\infty r \exp( -\pi \sum_{\ell \in \mathcal{K}} \lambda_\ell (\frac{ Z_\ell P_\ell }{Z_k P_k})^{ \frac{2}{\alpha_{\ell^\star}} } r^{\frac{2 \alpha_{k^\star}}{ \alpha_{\ell^\star} }}  )   \dint r. \notag 
\end{align}
This completes the proof.

\subsection{Proof of Lemma \ref{lem:4}}
\label{proof:lem:4}

Let $\Pi^{(i)}_k (x)$ denote the probability that the UE at position $x$ connects to tier $k$ in band $i$. Then the mean number of UEs attempting to connect to a BS of tier $k$ is given by
\begin{align}
L_{i,k} &= \frac{ \int_{\b R^2} \Pi^{(i)}_k (x) \Phi^{(u)} (d x) }{\int_{\b R^2}  \Phi_{k} (d x)} \notag  \\
&= \frac{ \int_{0}^{2\pi} \int_0^\infty \Pi^{(i)}_k (r,\theta) \lambda^{(u)} rdr d\theta  }{ \int_{0}^{2\pi} \int_0^\infty \lambda_k rdr d\theta} 
\notag  \\
& = \frac{\lambda^{(u)} \pi_k  \int_{0}^{2\pi} \int_0^\infty  rdr d\theta  }{\lambda_k \int_{0}^{2\pi} \int_0^\infty  rdr d\theta} 
 = 2 \pi \lambda^{(u)} G_k, \notag
\end{align}
where the third equality follows from the ergodicity of PPP $\Phi^{(u)}$, i.e., $\Pi^{(i)}_k (x) = \pi_k, \forall x$, and plugging $\pi^{(i)}_k$ yields the last equality. So if $x_{i,k}\neq 0 \textrm{ and } 0<    2 \pi \lambda^{(u)} G_k  \leq \frac{B_i}{b_{i,k}}$, tier $k$ is under-loaded in band $i$ and the load equals $b_{i,k} L_{i,k}/ B_i$. If $x_{i,k}\neq 0 \textrm{ and } 2 \pi \lambda^{(u)} G_k  > \frac{B_i}{b_{i,k}}$, tier $k$ is fully-loaded in band $i$ and the load equals $1$. If $x_{i,k} = 0$, it is trivial that $p_{i,k}=0$.

\subsection{Proof of Lemma \ref{lem:5}}
\label{proof:lem:5}
Note that 
$
P ( \| Y_{k_0} \|  \geq   r \mid J =k  ) = \frac{ P ( \| Y_{k_0} \|  \geq   r, J =k  )  }{\pi_k} 
$
where
\begin{align}
&\mathbb{P}( \|Y_{k_0} \| \geq r  ,  J = k )  \notag \\
&=\mathbb{P}( \|Y_{k_0} \| \geq r  \notag \\
& \quad \quad \quad \quad
\textrm{ and } Z_k P_{k} \cdot \| Y_{k_0} \|^{-\alpha_{k^\star}} \geq Z_\ell P_{\ell} \cdot  \| Y_{\ell_0} \|^{-\alpha_{\ell^\star}}, \forall \ell ) \notag \\
&= \int_r^{\infty} f_{\| Y_{k_0} \| } (x) \cdot \notag \\
& \quad \quad \b P (  \pi  \|Y_{\ell_0} \|^{ 2}   (Z_\ell P_{\ell})^{- \frac{ 2 }{ \alpha_{\ell^\star}  } }  \geq \pi (Z_kP_k)^{ -\frac{2}{\alpha_{\ell^\star} } } x^{\frac{2 \alpha_{k^\star}}{ \alpha_{\ell^\star} }} , \forall \ell  ) \dint x
\notag \\
&= \int_r^{\infty} f_{\| Y_{k_0} \| } (x) \cdot \notag \\
& \quad \quad \prod_{\ell \neq k} \b P (  \pi  \|Y_{\ell_0} \|^{ 2}   (Z_\ell P_{\ell})^{- \frac{ 2 }{ \alpha_{\ell^\star}  } }  \geq \pi (Z_kP_k)^{ -\frac{2}{\alpha_{\ell^\star} } } x^{\frac{2 \alpha_{k^\star}}{ \alpha_{\ell^\star} }}  ) \dint x \notag \\
&= \int_r^{\infty} f_{\| Y_{k_0} \| } (x) \prod_{\ell \neq k} \exp( - \pi \lambda_\ell (\frac{ Z_\ell P_\ell }{Z_k P_k})^{ \frac{2}{\alpha_{\ell^\star}} } r^{\frac{2 \alpha_{k^\star}}{ \alpha_{\ell^\star} }}  ) \dint x, \notag 
\end{align} 
where the last equality follows from (\ref{eq:r1:11}). Plugging $f_{\| Y_{k_0} \| } (x) = 2\pi \lambda_k r \exp( -\pi \lambda_k r^2 ), x \geq 0$, we get the ccdf of $\|Y_{k_0} \|$ conditional on $J=k$ as
\begin{align}
&\mathbb{P}( \|Y_{k_0} \| \geq r   \mid  J = k ) \notag \\
&= \frac{1}{\pi_k} 2 \pi \lambda_k \int_{r}^\infty x \cdot e^{ -\pi \sum_{\ell \in \mathcal{K}} \lambda_\ell (\frac{ Z_\ell P_\ell }{Z_k P_k})^{ \frac{2}{\alpha_{\ell^\star}} } x^{\frac{2 \alpha_{k^\star}}{ \alpha_{\ell^\star} }}  }  \dint x. \notag 
\end{align}
Differentiating $1 - \mathbb{P}( \|Y_{k_0} \| \geq r   \mid  J = k )$ with respect to $r$ yields the desired conditional pdf.

\bibliographystyle{IEEEtran}
\bibliography{IEEEabrv,first}


\end{document}